\documentclass[%
reprint,
superscriptaddress,
amsmath,amssymb,
prl,
]{revtex4-2}

\usepackage[unicode=true,colorlinks=true,citecolor=blue,urlcolor=blue]{hyperref}

\usepackage{gensymb}
\usepackage{graphicx}
\usepackage{bm}


\usepackage{amsthm}
\usepackage{physics}

\usepackage[addedmarkup=sl,xcolor=dvipsnames,defaultcolor = magenta]{changes}

\theoremstyle{definition}
 


\theoremstyle{theorem}

\usepackage{mathtools}

\makeatletter
\renewcommand*\env@matrix[1][\arraystretch]{%
  \edef\arraystretch{#1}%
  \hskip -\arraycolsep
  \let\@ifnextchar\new@ifnextchar
  \array{*\c@MaxMatrixCols c}}
\makeatother


\begin{document}

\title{
Experimental realization of stable exceptional chains protected by non-Hermitian latent symmetries unique to mechanical systems
}

\author{Xiaohan Cui}
\thanks{These authors contributed equally to this work.}
\affiliation{%
Department of Physics, The Hong Kong University of Science and Technology, Hong Kong, China
}%

\author{Ruo-Yang Zhang}
\thanks{These authors contributed equally to this work.}
\affiliation{%
Department of Physics, The Hong Kong University of Science and Technology, Hong Kong, China
}

\author{Xulong Wang}
\affiliation{%
Department of Physics, Hong Kong Baptist University, Kowloon Tong, Hong Kong, China
}%
\author{Wei Wang}
\affiliation{%
Department of Physics, Hong Kong Baptist University, Kowloon Tong, Hong Kong, China
}
\author{Guancong Ma}%
\email{phgcma@hkbu.edu.hk}
\affiliation{%
Department of Physics, Hong Kong Baptist University, Kowloon Tong, Hong Kong, China
}%
\author{C. T. Chan}%
\email{phchan@ust.hk}
\affiliation{%
Department of Physics, The Hong Kong University of Science and Technology, Hong Kong, China
}

\date{\today}

\begin{abstract}

Lines of exceptional points are robust in the 3-dimensional non-Hermitian parameter space without requiring any symmetry. However, when more elaborate exceptional structures are considered, the role of symmetry becomes critical. One such case is the exceptional chain (EC), which is formed by the intersection or osculation of multiple exceptional lines (ELs). In this study, we investigate a non-Hermitian classical mechanical system and reveal that a symmetry intrinsic to second-order dynamical equations, in combination with the source-free principle of ELs, guarantees the emergence of ECs. This symmetry can be understood as a non-Hermitian generalized latent symmetry, which is absent in prevailing formalisms rooted in first-order Schrödinger-like equations and has largely been overlooked so far. We experimentally confirm and characterize the ECs using an active mechanical oscillator system. Moreover, by measuring eigenvalue braiding around the ELs meeting at a chain point, we demonstrate the source-free principle of directed ELs that underlies the mechanism for EC formation. Our work not only enriches the diversity of non-Hermitian exceptional point configurations, but also highlights the new potential for non-Hermitian physics in second-order dynamical systems.

\end{abstract}

\maketitle

\textit{Introduction.---}Pioneered by the study of topological insulators in electronic systems~\cite{topology-review_Kane_2010_Rev.Mod.Phys.,topology-review_Zhang_2011_Rev.Mod.Phys.}, topology has brought a conceptual revolution sweeping across diverse fields of physics, including classical wave systems such as photonics~\cite{topologicalphotonics-review_Soljacic_2014_NaturePhoton,topologicalphotonics-review_Carusotto_2019_Rev.Mod.Phys.}, acoustics~\cite{ma2019Topological,xue2022Topological}, and elastic waves~\cite{kane2014Topological,susstrunk2016Classification,yoshida2019Exceptional,wang2015Topological,susstrunk2015Observation,fruchart2020Dualities}. On a different frontier, the development of non-Hermitian physics has recently been merging with topological phases ~\cite{shen2018Topological,kawabata2019Symmetry,kawabata2019Classification,wojcik2020Homotopy,wang2021Topological,hu2021Knots,yang2021fermion}. Due to the experimental advantages, classical systems are becoming powerful testbeds for non-Hermitian topological phenomena~\cite{ding2016Emergence,brandenbourger2019Nonreciprocal,tang2020Exceptional,ghatak2020Observation,tang2021Direct,zhang2021Acoustic,gao2021NonHermitian,zhang2021Observationa,hu2021NonHermitian,tang2022Experimental,baconnier2022Selective,liu2022Experimental,zhang2023Observation}. 
Exceptional points (EPs) are inherently non-Hermitian band singularities~\cite{heiss2004Exceptional}, where both the eigenvalues and eigenvectors of different bands coalesce. Many interesting topological properties arise from EPs and have been shown to be the foundations of diverse promising applications~\cite{miri2019Exceptional,ozdemir2019Parity,shankar2022Topological,bergholtz2021Exceptional,ding2022Review}. 
From a co-dimension consideration, the order-2 EPs can form stable exceptional lines (ELs) without the need for any symmetry in a three-dimensional (3D) parameter space, and the ELs can be further linked or knotted non-trivially~\cite{carlstrom2018Exceptional,yang2019NonHermitiana, zhang2020Bulkboundary,he2020Double,carlstrom2019Knotted,lee2020Imaging,zhang2021Tidal,wang2021Simulating}.
However, as a typical EL configuration, the
exceptional chain (EC)~\cite{cerjan2018Effects,yan2021Unconventional,zhang2022symmetry} formed by several connecting or osculating ELs is different from other EL morphologies because the stable existence of ECs demands symmetry protection. 
A recent study~\cite{zhang2022symmetry} has suggested that an EL can be assigned with an orientation determined by the complex-eigenvalue braiding around an EL, which 
is much similar to the right-hand rule describing the relation between DC electrical current and the magnetic fields it produces.
The oriented ELs are source-free in the parameter space.
It follows that stable ECs can exist as a consequence of the source-free principle of ELs in conjunction with certain symmetries, such as mirror, mirror-adjoint, and $C_2T$ symmetries~\cite{zhang2022symmetry,yan2021Unconventional}.
This interesting mechanism for stabilizing ECs has not yet been experimentally demonstrated or verified.
In this letter, we study the stable existence of ECs using a non-Hermitian mechanical model with three synthetic dimensions. Unlike most non-Hermitian models that are based on Schr\"odinger-like first-order differential equations, the mechanical oscillators are described by second-order differential equations (SDEs)~\cite{tisseur2001Quadratic}. 
We reveal that SDEs can exhibit special non-Hermitian symmetries that become hidden after linearization  to a Schr\"odinger-like Hamiltonian form, a process commonly employed in the study of topological mechanics~\cite{susstrunk2016Classification,yoshida2019Exceptional}. 
Via generalizing the recently proposed notion of latent symmetries~\cite{rontgen2021Latent,morfonios2021Flat,rontgen2023Hidden} to non-Hermitian scenarios, we demonstrate that these specific symmetries of the SDEs are essentially novel non-Hermitian latent symmetries of the linearized Hamiltonian, and they play a crucial role in protecting the stable ECs. 
 By constructing mechanical oscillators with active components~\cite{wang2022NonHermitian,wang2022Extended}, we experimentally realized an EC and characterized its topological features. Our results not only demonstrate the unique properties of ECs but also highlight the distinctive potential of SDE systems in the future study of non-Hermitian physics.




\textit{SDE for non-Hermitian oscillators.---}
For a mechanical system consisting of $N$ coupled harmonic oscillators, the dynamics is governed by the SDE~\cite{tisseur2001Quadratic}
\begin{equation}
     \mathbf{M}\frac{d^2}{dt^2} {X}(t) = - \mathbf{K}  {X}(t) -\mathbf \Gamma \frac{d}{dt} {X}(t),
\label{Eq-Newton}
\end{equation}
wherein $X(t) = (x_1(t), x_2(t),...,x_N(t))^\intercal$ represents the displacements of $N$ oscillators, and $\mathbf{M}$, $\mathbf{K}$, and $\mathbf{\Gamma}$ are real matrices representing the mass, stiffness, and damping matrices, respectively. $\mathbf{M}$, $\mathbf{\Gamma}$ are diagonal matrices, and $\mathbf{K}$ is symmetric (asymmetric) when the system is reciprocal (non-reciprocal). 
For time-harmonic solutions $x_i(t)=a_i e^{-i\omega t}$, the eigenfrequency $\omega$ obeys $Q(\omega)|\psi\rangle=(\omega^2\mathbf{M}  -  \mathbf{K} + i\omega \mathbf{\Gamma}) |\psi\rangle = 0 $,
where $|\psi\rangle= (a_1, a_2,..., a_N)^\intercal$ denotes the amplitudes of the oscillators and $Q(\omega)$ with real coefficient matrices is called a real quadratic matrix polynomial (QMP).
The quadratic eigenvalue problem (QEP) for $Q(\omega)$ is to find right $|\psi_n^r\rangle$  and left $|\psi_n^l\rangle$ eigenvectors and an eigenfrequency $\omega_n$ satisfying
\begin{equation} \label{Eq-QEPdefine}
    Q(\omega_n)|\psi_n^r\rangle =0, \quad \langle \psi_n^l | Q(\omega_n) = 0,
\end{equation}
Based on the customary approach in topological mechanics~\cite{susstrunk2016Classification,yoshida2019Exceptional}, the $N$-by-$N$ QEP can be transformed to a $2N$-dimensional linear eigenvalue problem by a standard mathematical technique of reducing the order of SDEs ~\cite{suppl2}: $\mathcal{H}|\Psi_n\rangle = i\left( {\begin{array}{*{20}{c}}
\mathbf{0}&\mathbf{1}\\
{ - \mathbf{M}^{-1}{\mathbf{K}}}&{ - \mathbf{M}^{-1}{\mathbf{\Gamma }}}
 \end{array}} \right) |\Psi_n\rangle = \omega_n |\Psi_n\rangle $ with $|{\Psi_n} \rangle=\left( |\psi^r_n\rangle, \frac{d}{dt}|\psi^r_n\rangle \right)^\intercal=\left( |\psi^r_n\rangle, -i\omega_n|\psi^r_n\rangle \right)^\intercal$,  
which takes the typical Hamiltonian formalism prevailing in tight-binding theories systems. 

The real QMP of any mechanical system possesses an intrinsic symmetry $Q(\omega)^*=Q(-\omega^*)$, which indicates for any right eigenvector $|\psi^r_n\rangle$ satisfying Eq.~\eqref{Eq-QEPdefine}, $|\psi_n^r\rangle^*$ is also a right eigenvector corresponding to the eigenvalue $-\omega_n^*$: $ Q(-\omega_n^*) |\psi_n^r\rangle^*=0$. Therefore, for a $N$-by-$N$ real QEP with a real line gap at $\mathrm{Re}(\omega)=0$, the $2N$ eigenvalues must come in pairs $(\omega_n, -\omega_n^*)$, implying that only $N$ eigensolutions of the system are independent. 
Indeed, by linearization, this intrinsic symmetry of real QMP manifests as a particle-hole symmetry of the corresponding $2N$-dimensional Hamiltonian $\mathcal H$: $\mathcal{H}^* = -\mathcal{H}$, protecting the pairwise eigenvalues~\cite{kane2014Topological,susstrunk2016Classification,yoshida2019Exceptional}. 
In the following, we focus on the upper $N$ bands with $\mathrm{Re}(\omega_n)>0$ because only the positive frequency (PF) modes are physically observable.


\begin{figure}[b!]
\includegraphics[width=0.49\textwidth]{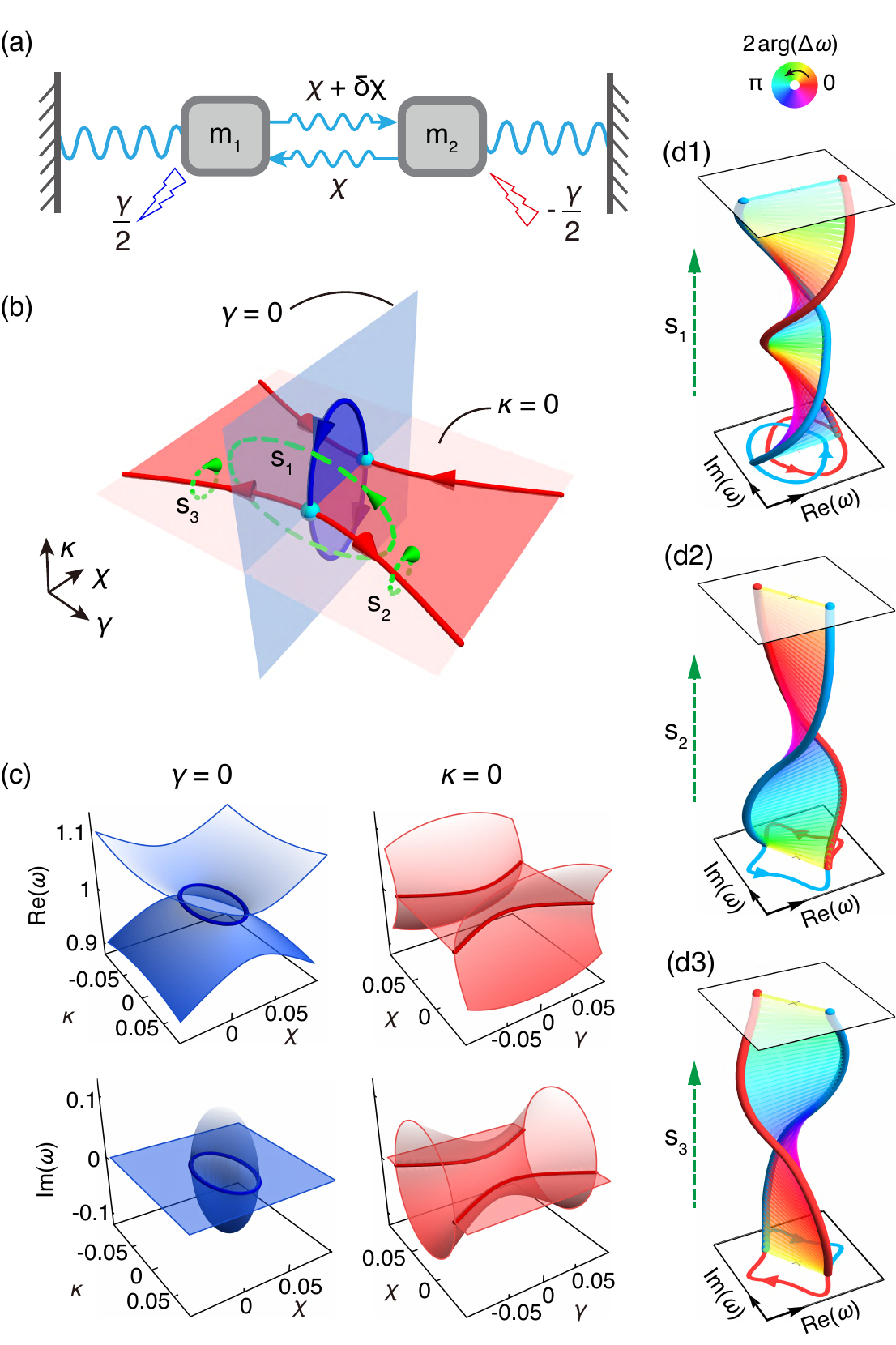}
\caption{\label{Fig-theory} 
(a) Schematic of the  nonreciprocally coupled oscillator model.
(b) An orthogonal EC formed by directed ELs in the 3D parameter space. (c) Eigenfrequencies of the two bands with positive real parts on planes $\gamma=0$ and $\kappa=0$. (d1-d3) Complex-eigenvalue braiding along the loops $S_{1,2,3}$ (green dashed lines) in (b), where the color of the bars connecting the two eigenfrequencies' trajectories represents twice the phase of the relative eigenfrequency $\Delta \omega =\omega_1-\omega_2$.
The fixed parameters used for plotting: $m_0=\bar\kappa=1$, $\delta\chi=-0.05$.}
\end{figure}

\textit{Theory of mechanical symmetry-protected ECs.---}
In a model with $N=2$ [Fig.~\ref{Fig-theory}(a)], a nonreciprocal spring connects two nonconservative oscillators ($m_{1,2}=m_0$) subject to intrinsic damping and gain, respectively, 
and their motions are described by a 2-by-2 QMP, where the coefficient matrices can be expressed as $\mathbf{M}= \mathrm{diag}(m_0,m_0)$,
\begin{equation}\label{Eq-TheorModel}
\mathbf K = \left( {\begin{array}{*{20}{c}}
{{\bar{\kappa}+\kappa/2}}&{{-\chi}}\\
{{-\chi-\delta\chi}}&{{\bar{\kappa}-\kappa/2}}
\end{array}} \right), 
\  
\mathbf \Gamma =\left( {\begin{array}{*{20}{c}}
{{ \gamma/2}}&{0}\\
{0}&{{-\gamma/2}}
\end{array}} \right). 
\end{equation}
The diagonal terms of $\mathbf{K}$ determine the natural frequencies of the two isolated oscillators, and $\kappa$ represents their difference. 
The off-diagonal terms are different, which indicates nonreciprocal, direction-dependent coupling. 
The strictly opposite diagonal terms of $\mathbf{\Gamma}$ indicate balanced damping ($\gamma/2$) and gain ($-\gamma/2$) in the two oscillators.

We treat $\gamma$ (relative damping), $\chi$ (coupling strength) and $\kappa$ (onsite stiffness difference) as three synthetic dimensions spanning a 3D parameter space $(\gamma,\chi,\kappa)$ and keep $\bar\kappa$ and $\delta\chi$ fixed, the model possesses two symmetries in the parameter space~\cite{note2}. 
The first is ``$\gamma$-symmetry", resulting from the balanced gain and loss of the two oscillators,
\begin{equation}\label{Eq-Symetery-gamma}
    Q^*(\omega,\gamma)=Q(\omega^*,-\gamma).
\end{equation}
Applying complex conjugate $\mathcal{K}$ on the Eq.~\eqref{Eq-QEPdefine}, we obtain $ Q^*(\omega_n,\gamma)|\psi_n^r\rangle^* =  Q(\omega_n^*,-\gamma)|\psi_n^r\rangle^* = 0$,
indicating that for any eigenfrequency $\omega_n$ at $\gamma$, there is always a corresponding eigenfrequency $\omega_n^*$ at $-\gamma$.
It follows that the eigenfrequencies on the high-symmetry plane $\gamma=0$ either take real values (outside the ring) or form complex conjugate pairs (inside the ring), as shown in the left panel of Fig.~\ref{Fig-theory}(c). And the $\gamma=0$ plane [Fig.~\ref{Fig-theory}(b)] is divided into exact (light blue) and broken (dark blue) phases by an exceptional ring (ER). We should note that when $\gamma=0$, the non-Hermiticity of the system comes purely from the nonreciprocity $\delta\chi$, which determines the size of the ER. 

In addition, since $\sigma_x \mathbf{K^\dagger(\kappa)} \sigma_x = \mathbf{K(-\kappa)}$, $\sigma_x \mathbf{\Gamma}^\dagger \sigma_x = -\mathbf{\Gamma}$  ($\sigma_x$ is the first Pauli matrix),
the QMP has a second ``$\kappa$-symmetry",
\begin{equation}\label{Eq-Symetery-kappa}
    \sigma_x Q^\dagger(\omega, \kappa) \sigma_x =  Q(\omega^*, -\kappa).
\end{equation}
Applying transpose conjugate on the QEP, we have
$\langle \psi_n^r|Q^\dagger(\omega_n, \kappa) =
    \langle \psi_n^r|\sigma_x Q(\omega_n^*, -\kappa) = 0$,
signifying that $\langle \psi_n^r|\sigma_x$ is a left eigenvector of $Q(\omega_n^*, -\kappa)$. Thus, for any eigenstate with frequency $\omega_n$ at $\kappa$, there is always an eigenstate with $\omega_n^*$ at $-\kappa$. 
On the high-symmetry plane $\kappa=0$, the eigenvalues are either real or form complex conjugate pairs [right panel of Fig.~\ref{Fig-theory}(c)], and the ELs are fixed on the $\kappa=0$ plane, serving as the phase transition boundary between the exact (light red) and broken (dark red) phases [Fig.~\ref{Fig-theory}(b)].
Finally, under the protection of the symmetries \eqref{Eq-Symetery-gamma} and \eqref{Eq-Symetery-kappa}, 
the ELs on two orthogonal planes connect together [cyan dots in Fig.~1(b)], 
forming a symmetry-protected EC in the 3D parameter space.

\begin{figure}[b!]
\includegraphics[width=0.49\textwidth]{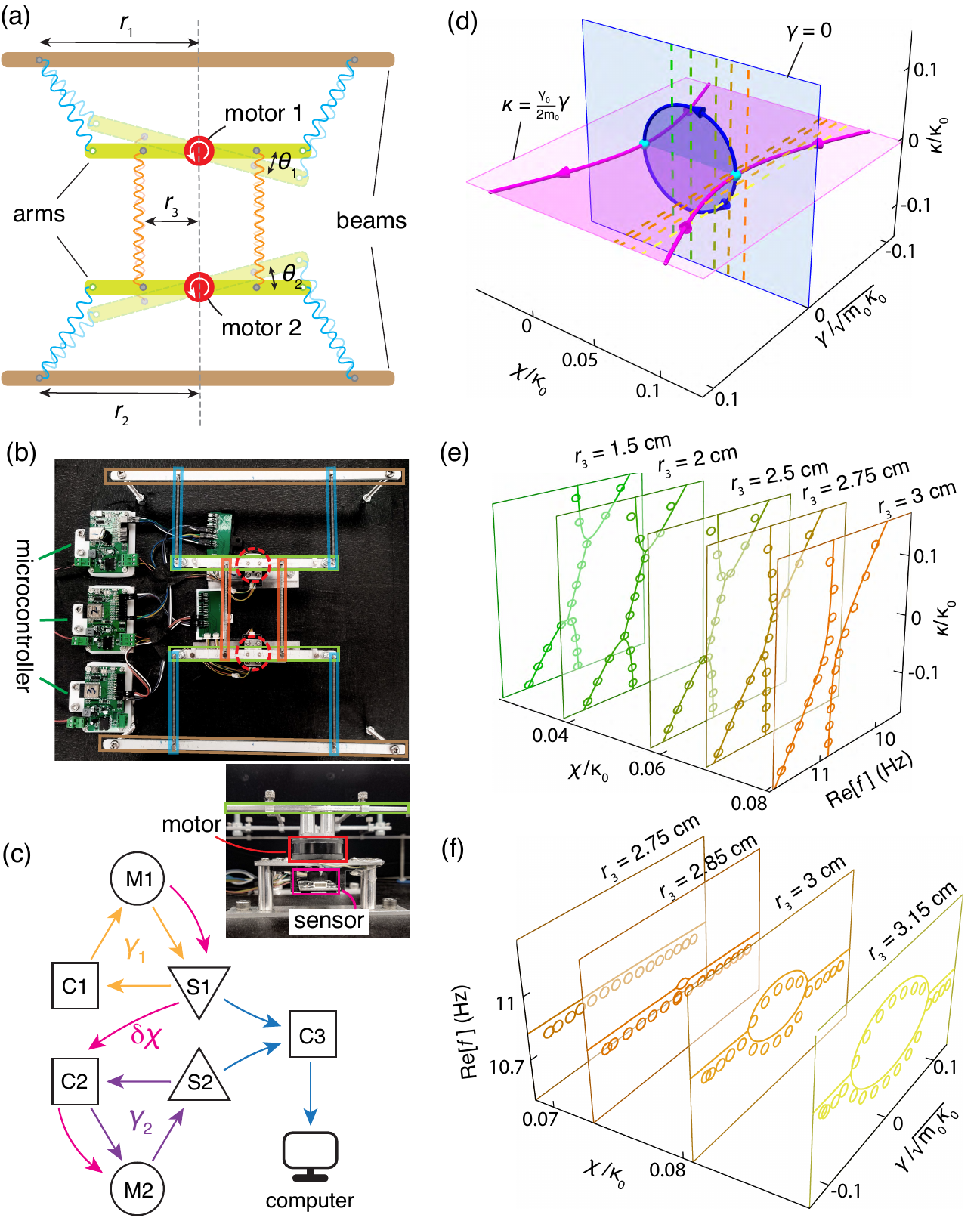}
\caption{\label{Fig-experiment} (a) Schematic of the active oscillators. (b) Experimental setup. (c) Schematic of the control system where the microcontrollers (C1, C2) process the signals from the sensors (S1, S2) and send commands to the motors (M1, M2), and the instantaneous rotation angles of the motors are recorded by computer via the microcontroller C3. (d) EC realized by the experimental model, where an ER (blue tube) fixed on the plane $\gamma=0$ connects two out-of-plane ELs (magenta tubes). The measured eigenfrequencies for different connection positions $r_3$ on the planes of (e) $\gamma=0$ and (f) $\kappa = \frac{\gamma_0}{2m_0}\gamma$. 
The fixed parameters retrieved using Green's function method: $\delta\chi=-0.073 \kappa_0$, $\gamma_0/\sqrt{m_0 \kappa_0}=0.085$.}
\end{figure}


The $\gamma$-symmetry \eqref{Eq-Symetery-gamma} of $Q(\omega)$ can be mapped to an antiunitary symmetry of the  4-by-4 effective Hamiltonian $\mathcal{H}$: $U_\gamma \mathcal{H}(\gamma) U_\gamma^{-1} 
= \mathcal{H}(-\gamma)$ with $U_\gamma= \tau_z \otimes \sigma_0 \mathcal{K} $. However, the $\kappa$-symmetry \eqref{Eq-Symetery-kappa} cannot be directly transformed to a usual symmetry of $\mathcal{H}$. Rather, we revealed that it corresponds to a non-Hermitian latent symmetry of $\mathcal{H}$, i.e., $\sigma_x(\mathcal{H}(-\kappa)^n)_{BR}\sigma_x=(\mathcal{H}(\kappa)^n)_{BR}^\dagger$ for any $n\in\mathbb{N}$ with $(\mathcal{H}^n)_{BR}$ denoting the bottom right block of $\mathcal{H}^n$~\cite{suppl2}, which generalizes the notion of latent symmetry recently proposed in Hermitian systems~\cite{rontgen2021Latent,morfonios2021Flat,rontgen2023Hidden}. This observation suggests that some symmetries of the original SDE become difficult to recognize in the effective Hamiltonian after linearization, and the origin of coalescence of eigenvalues becomes less obvious. Hence, directly analyzing the symmetries of the SDE is more natural in some scenarios for mechanical systems, though the effective Hamiltonian might be preferred for computational purposes. Moreover, the non-Hermitian latent symmetry also plays a significant role in characterizing the generalized crystalline symmetries appearing in non-Hermitian mechanical lattices (see Supplemental Materials~\cite{suppl2} for details).





In non-Hermitian systems, when a pair of bands form EPs, their eigenvalues, $\omega_m$ and $\omega_n$, will generally braid about each other along a loop $S$ encircling the EPs~\cite{wojcik2020Homotopy,wang2021Topological}. 
To characterize this eigenvalue braiding, a half-quantized topological invariant called energy vorticity~\cite{shen2018Topological} has been introduced,
$\nu_{mn}(S)=\frac{1}{2 \pi} \oint_{S} d \mathbf{g} \cdot \nabla_{\mathbf{g}} \arg \left[\omega_{m}(\mathbf{g})-\omega_{n}(\mathbf{g})\right]$.
As shown in Figs.~\ref{Fig-theory}(d1)-(d3), the eigenfrequency braiding of the two PF bands along the loop [dashed green lines in Fig.~\ref{Fig-theory}(b)] 
exhibits that twice the energy vorticity equals the net number of times the two bands braid. 
The energy vorticities carried by the loops are $\nu_{12}(S_1)=1$, $\nu_{12}(S_2)=1/2$ and $\nu_{12}(S_3)=-1/2$ respectively, where the sign of $\nu_{12}$ denotes the handedness of the braid and endows the encircled ELs with a positive orientation (which are indicated by arrows on ELs) in compliance with the right-hand rule~\cite{zhang2022symmetry}. For example, the positive (negative) sign of $\nu_{12}(S_2)$ ($\nu_{12}(S_3)$) indicates the EL has the same (reverse) direction as the rightward normal unit vector of the loop $S_2$ ($S_3$). And based on the direction of the ELs, we can prove a generalized source-free principle of the PF ELs in mechanical systems with the intrinsic particle-hole symmetry~\cite{zhang2022symmetry,suppl2}: the number of PF ELs flowing into a junction must equal the number of PF ELs flowing out. This behavior is clearly observed for the oriented ELs near the chain points in Fig.~\ref{Fig-theory}(b).
It is noteworthy that the source-free principle is crucial for the stable existence of ECs. Specifically, it elucidates why the ELs, constrained in distinct planes by the two symmetries, must osculate each other, and further implies that the chain point is robust against the breaking of either symmetry~\cite{suppl2}.

\textit{Experimental realization of ECs.---}
Our experiment is performed using active mechanical oscillators. 
In the setup [Fig.~\ref{Fig-experiment}(a-c)], two rotational arms (green) connected to brushless DC motors (red) are attached to rigid beams (brown) at $r_{1,2}$ by two identical springs (blue). Another two identical springs (orange) separated by $2r_3$ connect the two rotational arms in parallel.
At small-angle approximation, $\sin(\theta_n) \approx \theta_n$, the vibration equations of the active oscillators is described by the SDE in Eq.~\eqref{Eq-Newton}, where $X(t)=(\theta_1(t),\theta_2(t))^\intercal$ represents the oscillation angles of the two rotational arms with equal moments of inertia, $\mathbf{M}= \mathrm{diag}(m_0,m_0)$, and
the stiffness and damping matrices take the forms~\cite{suppl2},
\begin{equation}\label{Eq-ExperModel}
\begin{aligned}
\mathbf K &= \left( {\begin{array}{*{20}{c}}
{{\kappa_{0}+\chi}}&{{-\chi}}\\
{{-\chi-\delta\chi}}&{{\kappa_{0}-\kappa+\chi}}
\end{array}} \right), 
\\ 
\mathbf \Gamma &=\left( {\begin{array}{*{20}{c}}
{{ \gamma_0+\gamma/2}}&{0}\\
{0}&{{\gamma_0-\gamma/2}}
\end{array}} \right).
\end{aligned}
\end{equation}
$\mathbf{\Gamma}$ is loss-biased by adding a constant loss $\gamma_0$ to avoid instability issues encountered when the system is in the gain regime. 
The parameters $\chi$, $\kappa_0$, and $(\kappa_0-\kappa)$ are determined by conservative torques applied by the orange springs, the upper blue springs, and the lower blue springs in Fig.~\ref{Fig-experiment}(a), respectively. The parameters $\delta \chi$ and $\gamma$ are implemented as non-conservative torques exerted by the motors. To this end, the instantaneous oscillation angle is measured using a Hall sensor in real time. The angle is used as feedback to drive the microcontrollers that are programmed to command the motors to exert the desired torques [see Fig.~\ref{Fig-experiment}(b)]. 

\begin{figure*}[t!]
\includegraphics[width=0.95\textwidth]{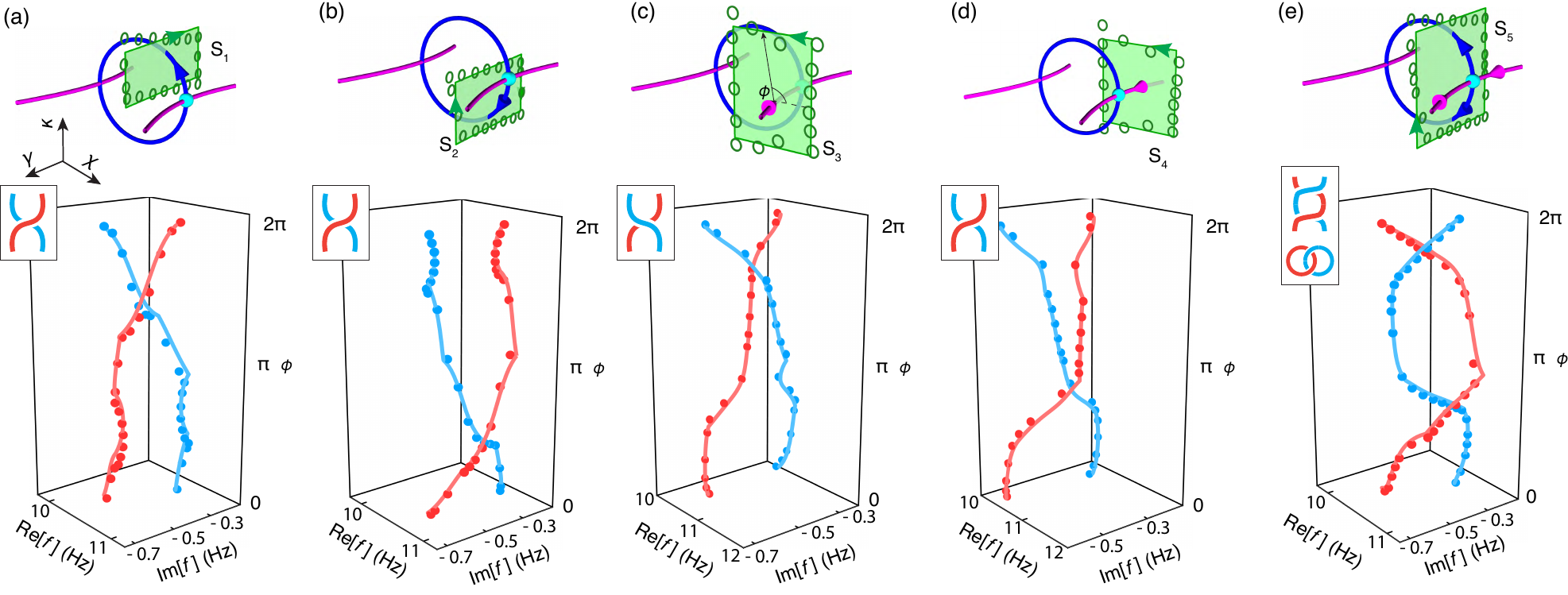}
\caption{\label{Fig-windingloop} 
Complex eigenfrequency braiding (red and cyan lines/dots in the lower panels) along rectangular loops (green lines/dots in the upper panels) encircling different ELs, where $\phi\in[0,2\pi]$ represents the sweeping parameter of the loops. The braiding diagrams are plotted in the insets. The details about the parameters retrieval can be found in Ref.~\cite{suppl2}. 
}
\end{figure*}

During the experiments, $\kappa_0$ is kept constant by fixing $r_1$, and $\kappa$ and $\chi$ are tuned by changing $r_2$ and $r_3$, respectively. $\gamma$ and $\delta \chi$ are tuned by velocity-dependent and angle-dependent torques applied to the motors [see details in~\cite{suppl2}], as shown in Fig.~\ref{Fig-experiment}(c). 
Compared with the theoretical model in Fig.~\ref{Fig-theory}, 
the biased loss $\gamma_0$  reduces the whole-space $\gamma$-symmetry shown by Eq.~\eqref{Eq-Symetery-gamma}  to a subspace symmetry on the $\gamma=0$ plane in
the experimental system:
\begin{equation}
    Q^*(\omega)=Q(\omega^*-i \gamma_0/m_0),
\end{equation}
Meanwhile, the $\kappa$-symmetry in Eq.~\eqref{Eq-Symetery-kappa} is reduced to a subspace symmetry on the oblique plane $\kappa=\frac{\gamma_0}{2m_0}\gamma$~\cite{suppl2}:
\begin{equation}
    \sigma_x{Q}^\dagger({\omega})\sigma_x={Q}({\omega}^*-i \gamma_0/m_0),
\end{equation}
which can be regarded as a non-Hermitian latent symmetry for the linearized Hamiltonian of the QMP on that plane. Remarkably, the two subspace symmetries can also guarantee that the eigenfrequencies on the two planes either appear in pairs $(\omega_n,\omega_n^* - i\gamma_0/m_0)$ or have a common imaginary part $-i\gamma_0/2m_0$, corresponding to the broken and exact phases, respectively.
Accordingly, despite the presence of background loss, ELs are still rigorously fixed at the boundaries of the two phases on the two planes and are joined on their intersection line $\gamma=\kappa=0$, hence forming an EC.

In Fig.~\ref{Fig-experiment}(d), we plot the EC configuration of the experimental model with a fixed nonreciprocal strength $\delta\chi = -0.073 \kappa_0$, which is retrieved using the Green's function method~\cite{suppl2}. The inclination of the $\kappa=\frac{\gamma_0}{2m_0}\gamma$ plane (magenta) is about $2.5\degree$ in the Figure due to the small value of $\gamma_0$ in the experiments.
Figures~\ref{Fig-experiment}(e) and (f) display the measured eigenfrequencies of the PF bands for varying $\chi$ [dashed lines in  Fig.~\ref{Fig-experiment}(d)] on the $\gamma=0$ and $\kappa=\frac{\gamma_0}{2m_0}\gamma$ planes. 
The EPs can be clearly identified from the measured band structures, and the EC formed by the ELs on the two planes becomes obvious by tracing the positions of the EPs.

By measuring eigenfrequency braidings, we experimentally verified the orientations of the ELs assigned in Fig.~\ref{Fig-experiment}(d) and validated the formation of the EC. As shown in Fig.~\ref{Fig-windingloop}(a-b), we varied the parameters $\gamma$ and $\kappa$ along two rectangular clockwise loops encircling the upper and lower semi-ER on the plane $\chi = 0.057 \kappa_0$. The measured eigenfrequency braidings along the loops are shown in the lower panels, from which the energy vorticities are obtained as $\nu_{12}(S_1) =\nu_{12}(S_2)= 1/2$. 
According to the right-hand rule, the upper and lower semi-ERs both flow outward from the chain point (cyan dot). 
As shown in Fig.~\ref{Fig-windingloop}(c-d), we chose another two loops $S_3$ and $S_4$ on the planes of $\gamma/\sqrt{m_0 \kappa_0} = 0.043$ and of $\gamma/\sqrt{m_0 \kappa_0} = -0.03$. The non-trivial eigenfrequency braiding implies the presence of an EL enclosed in each loop, and the orientations of the ELs are both toward the chain point as per the signs of energy vorticities $\nu_{12}(S_3) = -1/2$ and  $\nu_{12}(S_4) = 1/2$, which confirms the source-free principle at the chain points. Lastly, we consider a loop encircling the chain point (cyan dot) shown in Fig.~\ref{Fig-windingloop}(e), and find the eigenfrequencies braid twice, which form a Hopf link after connecting the eigenfrequency trajectories head-to-tail.

\textit{Discussions.---}
We have also studied the physical effect induced by the ECs in 3D mechanical lattices and have discovered that the unique concurrent linear intersections of the real and imaginary parts of bands near the EC point can give rise to the exotic phenomenon of splitting and reshaping a spatially localized pulse into two oppositely propagating needle pulses~\cite{suppl2}. This discovery demonstrates that ECs also possess functional properties with promising application potential. 

In summary, we revealed that the non-Hermitian latent symmetries intrinsic to SDE governing  mechanical systems play a pivotal role in the formation of ECs, and such a symmetry-protected EC is experimentally confirmed using coupled active mechanical oscillators.
By measuring the eigenfrequency braiding around ELs, we identified the orientations of the ELs and verified the generalized source-free principle for the PF ELs in SDE problems. Our study not only demonstrates the stable existence of EC as a different kind of topological configuration, but also opens the door toward the topological effects inherent to the SDEs, which govern a broad class of systems encompassing classical mechanics, classical waves, electricity ~\cite{helbig2020Generalized,liu2021NonHermitian,hu2022nonhermitian}, optomechanical~\cite{li2021NonHermitian}, and micro-electromechanical~\cite{xu2022Nonreciprocal} systems.


\begin{acknowledgements}
\textit{Acknowledgements.}---We thank Prof.~Zhao-Qing Zhang, Drs. Hongwei Jia, Jing Hu and Yixin Xiao for the helpful discussions.
This work is supported by the Research Grants Council of Hong Kong (AoE/P-502/20, R6015-18, RFS2223-2S01, 16307420, 12302420, 12301822), the Croucher Foundation (CAS20SC01), and the National Key R\&D Program of China (2022YFA1404400).
\end{acknowledgements}

\bibliography{reference.bib}

\end{document}


\title{
Supplemental Materials for ``Experimental realization of stable exceptional chains protected by non-Hermitian latent symmetries unique to mechanical systems''}

\author{Xiaohan Cui}
\thanks{These authors contributed equally to this work.}
\affiliation{%
Department of Physics, The Hong Kong University of Science and Technology, Hong Kong, China
}%

\author{Ruo-Yang Zhang}
\thanks{These authors contributed equally to this work.}
\affiliation{%
Department of Physics, The Hong Kong University of Science and Technology, Hong Kong, China
}%

\author{Xulong Wang}
\affiliation{%
Department of Physics, Hong Kong Baptist University, Kowloon Tong, Hong Kong, China
}%
\author{Wei Wang}
\affiliation{%
Department of Physics, Hong Kong Baptist University, Kowloon Tong, Hong Kong, China
}
\author{Guancong Ma}%
\email{phgcma@hkbu.edu.hk}
\affiliation{%
Department of Physics, Hong Kong Baptist University, Kowloon Tong, Hong Kong, China
}%
\author{C. T. Chan}%
\email{phchan@ust.hk}
\affiliation{%
Department of Physics, The Hong Kong University of Science and Technology, Hong Kong, China
}





\maketitle



\setcounter{secnumdepth}{3}
\setcounter{equation}{0}
\setcounter{figure}{0}
\renewcommand{\theequation}{S\arabic{equation}}
\renewcommand{\thefigure}{S\arabic{figure}}
\newcommand\Scite[1]{[S\citealp{#1}]}

\renewcommand{\thetheorem}{\arabic{theorem}.}
\renewcommand{\thedefinition}{\arabic{definition}.}


\vspace{-20pt}

\tableofcontents

\section{Generalized source-free principle for positive-frequency exceptional lines in mechanical systems}

\begin{figure*}[b!]
\includegraphics[width=0.8\textwidth]{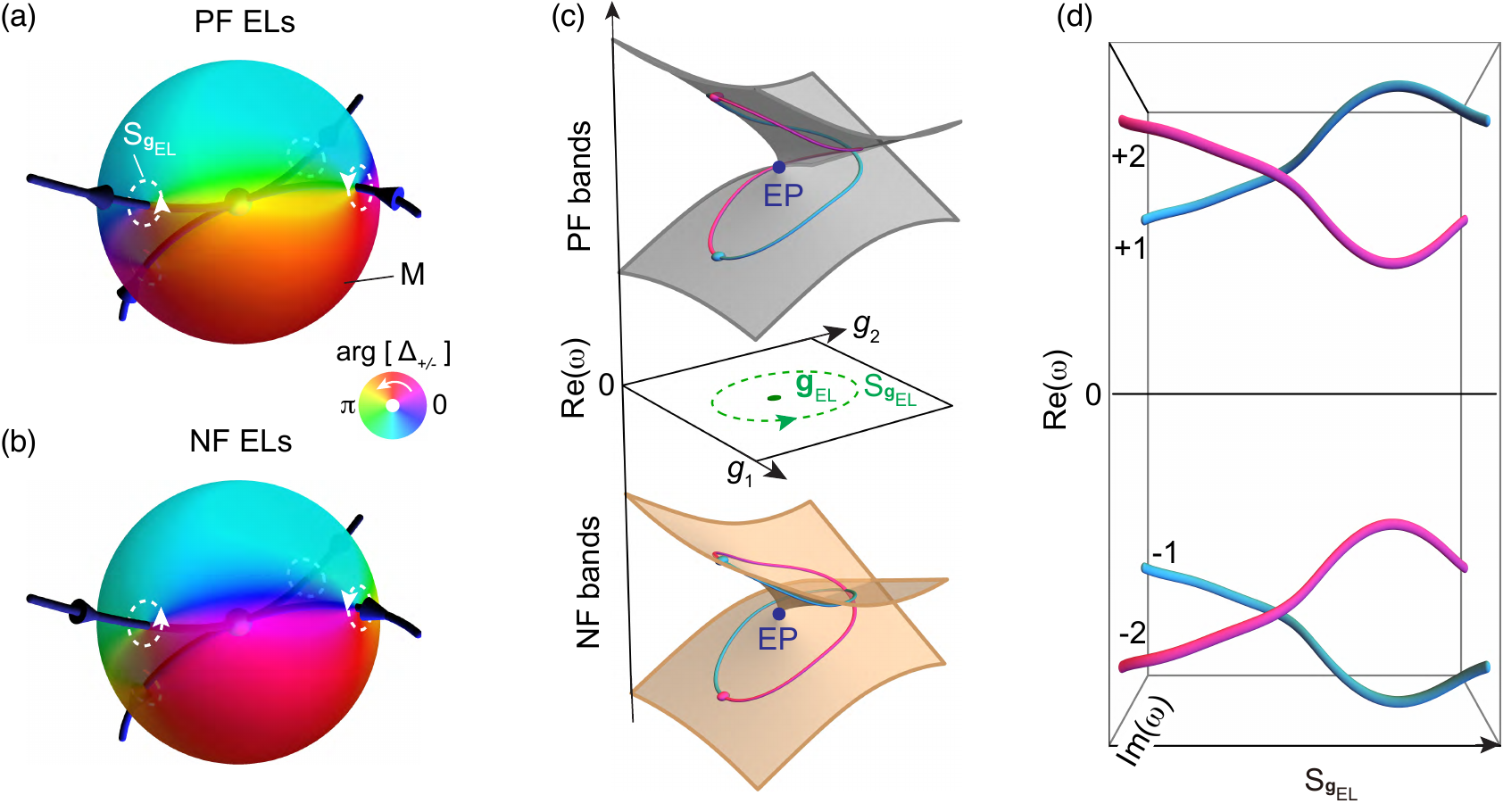}
\caption{\label{Fig-source-free-principle}
The generalized source-free principle for (a) PF ELs and (b) NF ELs, where the colors on the sphere enclosing the chain point denote the arguments of the PF and NF discriminants, respectively. (c) Spectral symmetry $\omega_n(\mathbf{g})=-\omega_{-n}(\mathbf{g})^*$ of PF and NF bands induced by the intrinsic particle-hole symmetry, where the PF and NF EPs always appear in pairs at the same point in the parameter space. (d) Spectral symmetry guarantees that PF and NF bands along a loop $S_{\mathbf{g}_\mathrm{EL}}$ encircling 
the pair of PF and NF EPs always braid in opposite manners, giving rise to opposite energy vorticies.
}
\end{figure*}
In general non-Hermitian systems, the topology of exceptional lines (ELs) can be captured by the discriminant $\Delta_p$ of the characteristic polynomial $p(\omega)=\det[\omega-\mathcal{H}]$ of the Hamiltonian $\mathcal{H}$. Since the discriminant $\Delta_p$ is a complex number determined by the eigenvalues $\omega_n$: $\Delta_p=\prod_{m<n}(\omega_m-\omega_n)^2$, the phase winding number of $\Delta_p$ along a loop $S$ in the parameter space introduces an integer topological invariant, called discriminant number (DN)~\cite{yang2021fermion}, 
\begin{equation}\label{DN}
    \mathcal{D}(S)=\frac{1}{2\pi}\oint_S d\mathbf{g}\cdot\nabla_{\mathbf{g}}\arg\Delta_p(\mathbf{g})=2\sum_{m<n}\nu_{mn}(S),
\end{equation}
which precisely equals twice the sum of the energy vorticities of all pairs of bands, where $\mathbf{g}$ denotes the position vector in the 3D parameter space. In particular, for a loop $S_\mathrm{EL}$ enclosing a single EL formed by bands $n$ and $n+1$,  $\mathcal{D}(S_\mathrm{EL})=2\nu_{n,n+1}(S_\mathrm{EL})$, hence the orientation of the EL can alternatively be determined by  the DN $\mathcal{D}(S_\mathrm{EL})$~\cite{zhang2022symmetry}. Moreover, it has been proved that the directed ELs obey the \textbf{source-free principle} in 3D parameter space~\cite{zhang2022symmetry}, meaning that for an arbitrary oriented and closed surface in the space, the numbers of ELs flowing into and flowing out from the surface are always balanced. This is also the key principle ensuring the formation of exceptional chains.

However, in mechanical systems, the intrinsic particle-hole symmetry restricts the discriminant of the characteristic polynomial $p(\omega)=\det[\omega-\mathcal{H}]=\frac{(-1)^N}{\det\qty[\mathbf{M}]}\det[Q(\omega)]$ to be real-valued $\Delta_p\in\mathbb{R}$. As a result, the DN along any loop is always trivial $\mathcal{D}(\Gamma)=0$. Physically, the spectral symmetry $\omega(\mathbf{g})_{n}=-\omega(\mathbf{g})_{-n}^*$ between the positive-frequency (PF) and negative-frequency (NF) bands induced by the particle-hole symmetry guarantees that  PF and NF ELs always appear pairwise in the parameter space, as illustrated in Fig.~\ref{Fig-source-free-principle}(a,b). In each pair, they spatially coincide exactly but have opposite orientations due to the opposite energy vorticities along a loop $S$ encircling the pair of ELs [see Fig.~\ref{Fig-source-free-principle}(c,d)]:
\begin{equation}
    \nu_{n,n+1}(S)=\frac{1}{2 \pi} \oint_{S} d \arg \left[\omega_{n}(\mathbf{g})-\omega_{n+1}(\mathbf{g})\right]=\frac{1}{2 \pi} \oint_{S} d \arg \left[-\omega_{-n}^*(\mathbf{g})-\omega_{-n-1}^*(\mathbf{g})\right]=-\nu_{-n,-n-1}(S),
\end{equation}
where $n(>0),n+1$ and $-n,(-n-1)$ label the two pairs of bands forming the PF EL and NF EL, respectively. Consequently, the source-free principle of ELs is trivially satisfied in mechanical systems because the contributions of a PF NL and its ``image'' NF EL are always canceled. 

Nevertheless, as long as the PF and NF bands are well separated by a real line gap at zero frequency $\omega=0$, we can construct a ``\textbf{positive-frequency discriminant}'' purely using the eigenfrequencies $\omega_n$ ($n>0$) with positive real parts:
\begin{equation}
    \Delta_+(\mathbf{g})=\prod_{0<m<n}\qty(\omega_m(\mathbf{g})-\omega_n(\mathbf{g}))^2,
\end{equation}
whose zeros correspond to the degeneracies of the PF bands. Given that $\mathcal{H}(\mathbf{g})$ is a continuous function of the parameters $\mathbf{g}$, all the eigenvalues of $\mathcal{H}$ are also continuous with respect to $\mathbf{g}$ (see Theorem 5.1 in Ref.~\cite{kato1995perturbation}), which further guarantees the PF discriminant $\Delta_+(\mathbf{g})$ to be a continuous single-valued function in the parameter space. Akin to the DN~\eqref{DN}, the phase winding number of $\Delta_{+}(\mathbf{g})$ along a closed loop $S$ is also an integer topological invariant, which we term \textbf{positive-frequency discriminant number} (PFDN),
\begin{equation}
    \mathcal{D}_+(S)=\frac{1}{2\pi}\oint_S d\mathbf{g}\cdot\nabla_{\mathbf{g}}\arg\Delta_+(\mathbf{g})=2\sum_{0<m<n}\nu_{mn}(S),
\end{equation}
which counts the net number of directed PF ELs passing through the loop $S$. And if the loop encircles solely one PF EL, we may also use the PFDN $\mathcal{D}_+(S)=\pm1$ to assign the orientation of the PF EL with the positively oriented tangent vector of the EL:
$
    \mathbf{t}_\mathrm{EL}=\mathcal{D}_+(S)\mathbf{t}_S
$, 
where $\mathbf{t}_S$ represents the tangent vector of EL with its positive direction determined by the right-hand rule of the directed loop $S$. 
Then, using the Poincar\`e-Hopf theorem for complex line bundles, we can generalize the source-free principle for all ELs~\cite{zhang2022symmetry} to the PF ELs in mechanical systems:
\begin{theorem}[Source-free principle for PF ELs]
For a quadratic matrix polynomial, as long as the PF and NF bands are well-separated by a real line gap at $\mathrm{Re}(\omega)=0$ on an oriented and closed surface $M$ in the 3D parameter space, the ELs passing through the surface obey the index theorem:
\begin{equation}
    \sum_{\mathbf{g}_\mathrm{EL}\in M}\mathcal{D}_+(S_{\mathbf{g}_\mathrm{EL}})\equiv 0,
\end{equation}
where $\mathbf{g}_\mathrm{EL}$ denote the points on $M$ where the PF ELs pierce through; around each degenerate point $\mathbf{g}_\mathrm{EL}$ on $M$, $S_{\mathbf{g}_\mathrm{EL}}$ denotes a small loop on $M$ encircling it and the positive direction of the loop is consistent with the outward normal of the surface. The index theorem manifests that \textbf{the number of  PF ELs flowing into the surface always equals that of the PF ELs flowing out of the surface}.
\end{theorem}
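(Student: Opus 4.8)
The plan is to recast the statement as a vanishing-total-index identity for a section of a line bundle over the closed surface $M$ and then read it off from the Poincar\'e--Hopf theorem. First I would make the hypotheses precise: since the PF and NF bands are separated by a real line gap at $\mathrm{Re}(\omega)=0$ on a neighborhood of $M$, the set $\{\omega_m(\mathbf{g}):\mathrm{Re}(\omega_m)>0\}$ of PF eigenvalues is globally well-defined and varies continuously with $\mathbf{g}$ (by the continuity of the spectrum recalled above); equivalently, the characteristic polynomial factors analytically as $p=p_+p_-$, with $p_+$ the monic polynomial whose roots are the PF frequencies and whose coefficients, expressible as contour integrals of $p'/p$ around the PF roots, depend analytically on $\mathbf{g}$ as long as no root crosses $\mathrm{Re}(\omega)=0$. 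Therefore $\Delta_+(\mathbf{g})$, the discriminant of $p_+$, is a single-valued (real-)analytic $\mathbb{C}$-valued function near $M$, i.e.\ a section of the trivial complex line bundle $\underline{\mathbb{C}}=M\times\mathbb{C}$, whose zero set meets $M$ precisely in the points $\mathbf{g}_{\mathrm{EL}}$ where PF ELs pierce through. After an arbitrarily small generic deformation of $M$ (keeping it closed, oriented, and disjoint from any EL tangency) these intersections are transversal, hence isolated and finite in number, and by construction $\mathcal{D}_+(S_{\mathbf{g}_{\mathrm{EL}}})$ is exactly the local index (winding number of $\Delta_+$) of the section at $\mathbf{g}_{\mathrm{EL}}$, with its sign fixed by the outward-normal orientation convention.

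With this dictionary, the theorem is the Poincar\'e--Hopf / Euler-class identity for complex line bundles: for a section with isolated zeros of a complex line bundle $L\to M$ over a closed oriented surface, the sum of the local indices of the zeros equals $\int_M c_1(L)$. Here $L=\underline{\mathbb{C}}$ is trivial, so $c_1(L)=0$, and hence $\sum_{\mathbf{g}_{\mathrm{EL}}\in M}\mathcal{D}_+(S_{\mathbf{g}_{\mathrm{EL}}})=0$.

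For a self-contained version I would bypass the bundle language: excise a small open disk $D_i$ around each piercing point $\mathbf{g}_{\mathrm{EL},i}$ and restrict attention to the compact surface-with-boundary $M'=M\setminus\bigcup_i D_i$, on which $\Delta_+$ never vanishes. The pullback $\eta=\Delta_+^{\,*}\bigl(\tfrac{1}{2\pi}\tfrac{x\,dy-y\,dx}{x^2+y^2}\bigr)$ of the standard angular form is a smooth closed $1$-form on $M'$, and $\oint_c\eta$ equals the winding number of $\arg\Delta_+$ along any loop $c\subset M'$. Stokes' theorem then gives $0=\int_{M'}d\eta=\int_{\partial M'}\eta$; since the boundary orientation induced on each excised circle is opposite to that of $S_{\mathbf{g}_{\mathrm{EL},i}}$, this reads $-\sum_i\mathcal{D}_+(S_{\mathbf{g}_{\mathrm{EL},i}})=0$, i.e.\ the asserted index theorem (the overall sign being immaterial for the conclusion).

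I would not belabor the routine parts — the expansion of $\Delta_+$ in the PF eigenvalues, and the bookkeeping relating winding numbers, energy vorticities, and orientations. The substantive step, and the one I expect to be the main obstacle, is the first one: establishing that $\Delta_+$ is a globally well-defined, single-valued, continuous function on a neighborhood of $M$. This is precisely where the real line gap at $\mathrm{Re}(\omega)=0$ is essential, since it is what makes ``positive-frequency'' a globally meaningful label: although the individual bands $\omega_m$ are multivalued and braid around the ELs, the elementary symmetric functions of the PF eigenvalues alone, and in particular $\Delta_+$, remain single-valued; were the PF and NF bands to touch at $\omega=0$ somewhere on $M$, $\Delta_+$ would not even be defined and the conclusion would fail. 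The remaining technical care is the genericity reduction to finitely many transversal piercings (or, equivalently, a homotopy argument that the sum of boundary integrals is insensitive to small deformations of $M$), together with a consistent choice of orientations so that the Stokes boundary terms carry the signs claimed in the statement.
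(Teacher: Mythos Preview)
Your proposal is correct and takes essentially the same approach as the paper: your self-contained Stokes argument on the punctured surface is exactly the paper's proof, which phrases your angular form $\eta$ as a Berry connection $\mathcal{A}=\nabla_{\mathbf{g}}\arg\Delta_+$ with vanishing curvature $\mathcal{F}=\nabla_{\mathbf{g}}\times\mathcal{A}=0$ away from the piercing points. The paper relegates the well-definedness of $\Delta_+$ to the discussion preceding the theorem and does not spell out the transversality/genericity reduction, but the core argument is identical.
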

\begin{proof}
Regarding the PF discriminant $\Delta_+(\mathbf{g})$ as a complex function on the surface, we may introduce the corresponding Berry connection and Berry curvature away from the ELs ($\mathbf{g}\not\in\qty{\mathbf{g}_\mathrm{EL}}$) on the surface:
\begin{align*}
    \mathcal{A}(\mathbf{g}) &=-i\frac{\Delta_+^*\nabla_\mathbf{g}\Delta_+}{|\Delta_+|^2}=\nabla_\mathbf{g}\arg\Delta_+(\mathbf{g}),\qquad
    \mathcal{F}(\mathbf{g}) =\nabla_\mathbf{g}\times\mathcal{A}(\mathbf{g})=0.
\end{align*}
Therefore, using Stokes' theorem on the oriented surface $M$ excluding the areas near the degenerate points $\qty{\mathbf{g}_\mathrm{EL}}$ (i.e., the regions enclosed by the small loops $S_{\mathbf{g}_\mathrm{EL}}$), we obtain
\begin{equation}
    \sum_{\mathbf{g}_\mathrm{EL}\in M}\mathcal{D}_+(S_{\mathbf{g}_\mathrm{EL}})=\sum_{\mathbf{g}_\mathrm{EL}\in M}\oint_{S_{\mathbf{g}_\mathrm{EL}}}d\mathbf{g}\cdot\mathcal{A}(\mathbf{g})=\iint_{M-\qty{S_{\mathbf{g}_\mathrm{EL}}}} du\wedge dv\ \hat{\mathbf{n}}\cdot\mathcal{F}(\mathbf{g})\equiv 0,
\end{equation}
where $M-\qty{S_{\mathbf{g}_\mathrm{EL}}}$ represents the region on the surface excluding the areas enclosed by the small loops $S_{\mathbf{g}_\mathrm{EL}}$ [see e.g. the sphere in Fig.~\ref{Fig-source-free-principle}(a) excluding the areas inside the white dashed loops], $\hat{\mathbf{n}}$ denotes the outward normal of the surface and $du\wedge dv$ denotes the surface element.
\end{proof}
Similarly, by introducing an NF discriminant $\Delta_-=\Pi_{m<n<0}\qty(\omega_m-\omega_n)^2$, the source-free principle for NF ELs can also be proved [see Fig.~\ref{Fig-source-free-principle}(b)].  
According to the generalized source-free principle, we know that when several PF ELs meet at a junction, the inflow and outflow PF ELs must be balanced, which lays the foundation for the formation of exceptional chains in mechanical systems.


\section{Topological nature of symmetry-protected exceptional chains}
In this section, we will explain the relationship between eigenvalue braiding and the exceptional chain. Although the presence of eigenvalue braiding is irrespective of whether the ELs touch or not, the converse is not true. Indeed, the braiding (or equivalently the orientation of ELs) plays a crucial role in the formation of the chain of ELs under the combined action with certain symmetries. And the source-free principle of ELs serves as the foundation for comprehending the stability of ECs. Furthermore, we proved that the EC is a new symmetry-protected topological (SPT) gapless phase and proposed a topological invariant characterizing the distinct topology of the EC.

\subsection{Relationship between source-free principle and stability of ECs}
Let us elucidate the role of eigenvalue braidings and the source-free principle in stabilizing the EC through a step-by-step process. We consider a system with two symmetries $R_1, R_2$, where $R_1, R_2$ can be either the $\gamma$- and $\kappa$- symmetries of the synthetic dimension model or the $C_2T$ and (generalized) mirror-adjoint symmetries of the 3D lattice model (see Section.\ref{sec-3Dlattice}), and their symmetry-invariant planes $\Pi_1, \Pi_2$ are orthogonal. As shown in Fig. \ref{Fig-ECformation}(a), the red EL is confined by $R_1$ in its symmetry-invariant plane $\Pi_1$. Due to the second symmetry $R_2$, the EL lies symmetrically about the plane $\Pi_2$. Notably, $R_2$ also imposes spectral symmetry onto the eigenvalues: $\omega(\mathbf{g}) = \omega^*(\hat{R_2} \mathbf{g})$, which indicates that the eigenvalue braidings, characterized by the energy vorticity $\nu_{12}$ of the two relevant bands along the two $R_2$-symmetric loops $\Gamma$ and $R_2\Gamma$, must take opposite signs: $\nu_{12}(\Gamma) = -\nu_{12}(\hat{R_2}\Gamma)$. Hence, the red EL must reverse its orientation when crossing the middle plane $\Pi_2$. Consequently, both red half-ELs flow inward to the midpoint $K_0$, and, as depicted in Fig. \ref{Fig-ECformation}(b), the source-free principle of PF ELs requires the existence of two additional ELs (i.e., the two blue half-ELs) on $\Pi_2$ outflowing from $K_0$, thereby forming the EC. From this analysis, we can observe that the eigenvalue braiding acts as the quantized topological flux carried by the ELs, symmetries impose constraints on the distribution of these fluxes, and the conservation of topological flux (i.e., the source-free principle) ensures the formation of ECs. 

The conservation of braiding topological flux can even provide further insights. Breaking either one of the symmetries (say, $R_2$) causes the corresponding EL to deviate from the symmetry-invariant plane (see the blue EL in Fig. \ref{Fig-ECformation}(c)). However, as long as the in-plane red EL still reverses its orientation at a point, it must connect with the out-of-plane EL at this orientation-reversal point to ensure topological flux conservation. As a result, the EC persists even in the presence of only $R_1$ symmetry. If and only if both symmetries are broken, the EC can be untied into two separate ELs, as shown in Fig.~\ref{Fig-ECformation}(e). Therefore, without the knowledge of the orientations of ELs assigned by the eigenvalue braidings, it is impossible to explain why the EC remains stable in the scenario of Fig.~\ref{Fig-ECformation}(c), and why the chain of ELs cannot be separated or transformed into an exceptional link, as illustrated in Fig. \ref{Fig-ECformation}(d), but can solely evolve into the two configurations in Fig.~\ref{Fig-ECformation}(e).

\begin{figure*}[t!]
\includegraphics[width=0.9\textwidth]{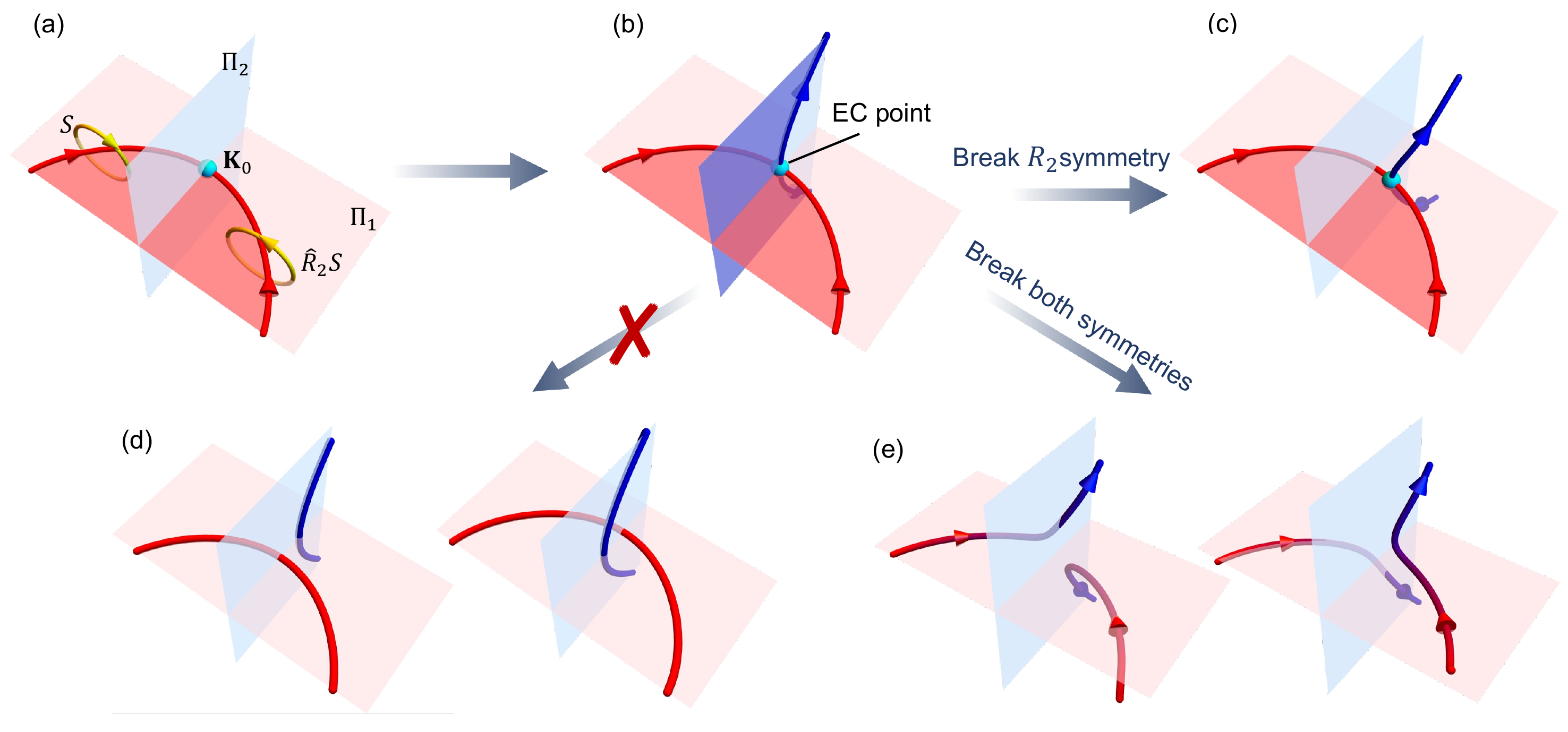}
\caption{\label{Fig-ECformation}
(a) An EL is confined by the symmetry $R_1$ in the symmetry-invariant plane $\Pi_1$. (B) Source-free principle of PF ELs demands the EL on $\Pi_2$ emerging from the chain point. (c) Beaking a single symmetry cannot lift the EC point. (d) Two EL configurations forbidden by the source-free principle. (e) Two possible evolutions when both symmetries are broken.
}
\end{figure*}

\subsection{Topological invariant characterizing symmetry-protected EC points}
As a new type of symmetry-protected topological (SPT) gapless phases, the EC points are stable against any symmetry-preserving perturbations and can only be eliminated by breaking the symmetries or through annihilation with other chain points.
Notwithstanding, the topology of EC points cannot be fully characterized by conventional methods, such as the enumeration of topological charges carried by loops or closed surfaces that encompass the singularities. The former is rendered ineffective because no loop can enclose a point in three dimensions, and the latter is inadequate since the surfaces wrapping EC points must undergo gap closing at the intersection points with ELs. 
Here, we overcome this problem by proposing a topological invariant carried by an open arc that terminates on the high-symmetry line to characterize the distinct topology of the symmetry-protected EC points.

\begin{figure*}[t!]
\includegraphics[width=0.55\textwidth]{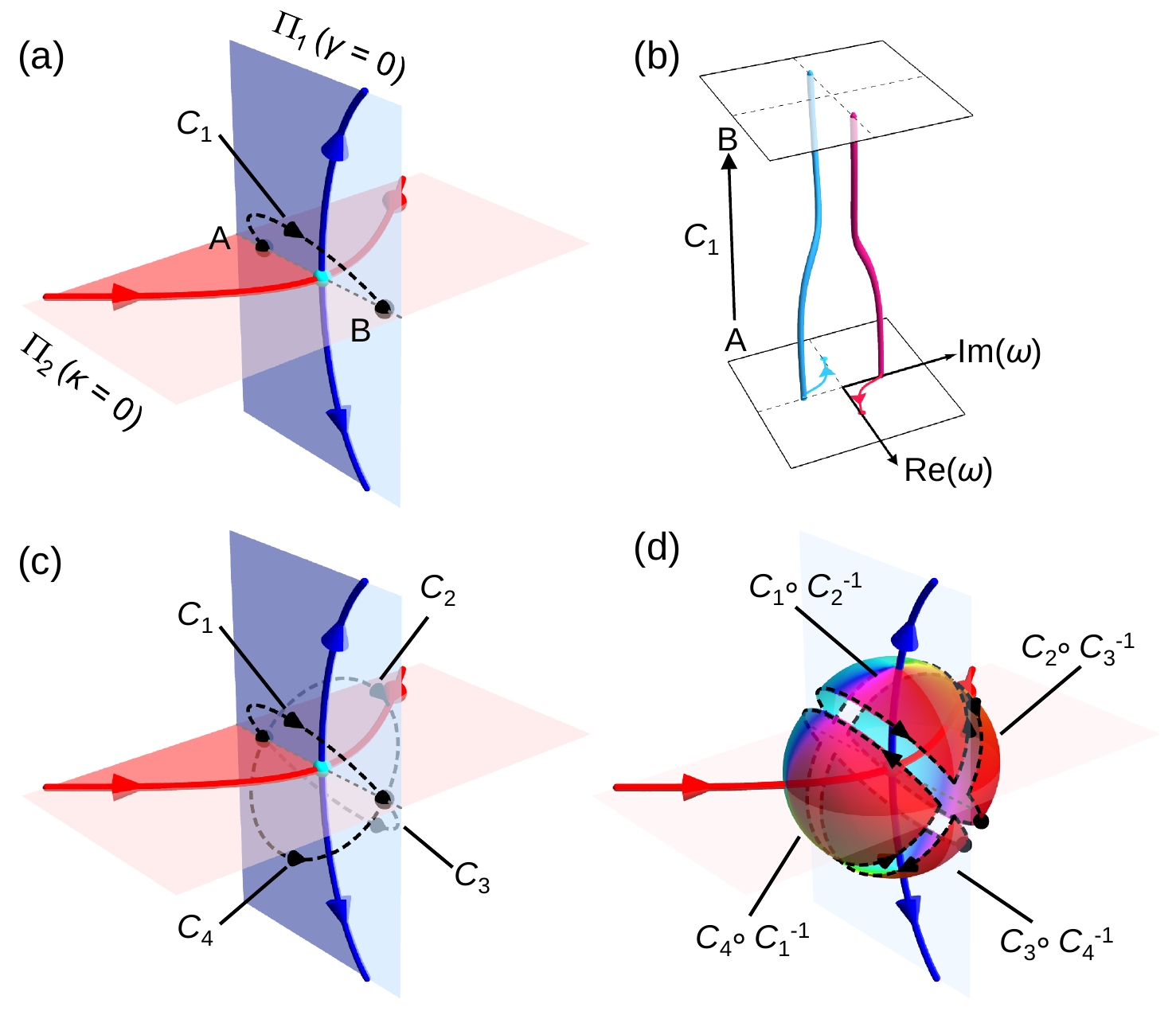}
\caption{\label{Fig-ChainInvariant}
(a) An open arc $C_1$ connecting two points on the intersection line of two symmetry-invariant planes, $\gamma=\kappa=0$.
(b) The quantized half-braid of two PF eigenfrequeices along the arc $C_1$. Since the initial and final eigenfrequencies are fixed along the real axis and a line parallel to the imaginary axis, respectively, the energy vorticity along $C_1$ must be a multiple of $1/4$.
(c) The three mirror images $C_2,C_3,C_4$ of the arc $C_1$ about the two planes $\Pi_1$ and $\Pi_2$. (d) The PF discriminant numbers carried by the four concatenated loops determine the number of ELs emanating from the chain points on the line segment $\overline{\mathrm{AB}}$. The four loops divide a closed surface enclosing the line segment $\overline{\mathrm{AB}}$ into four parts. The colors on the surfaces represent the argument of the PF discriminant $\arg\Delta_+$.  }
\end{figure*}

We consider the symmetry-protected ECs in the presence of both $\gamma$- and $\kappa$-symmetries introduced in the main text.
Since the EC points are confined along the high-symmetry line $\gamma=\kappa=0$ by the two symmetries, we may consider an open arc $C_1$ that terminates on this line, as shown in Fig. \ref{Fig-ChainInvariant}(a). As the eigenfrequencies are either real or in complex-conjugate pairs on this line, the PF discriminant takes real values along this line:
\begin{equation}
    \Delta_{+}(\mathbf{g})=\prod_{0<m<n}\left(\omega_m(\mathbf{g})-\omega_n(\mathbf{g})\right)^2 \in \mathbb{R}, \quad \forall \mathbf{g}=(\gamma, \chi, \kappa)=(0, \chi, 0).
\end{equation}
Therefore, as long as the eigenvalues at the endpoints A, B of the arc are non-degenerate, $\Delta_+(\mathrm{A})$, $\Delta_+(\mathrm{B})$ are non-zero, and the PF discriminant number along the open arc $C_1$ is half-quantized (the corresponding energy vorticity is quarter-quantized):
\begin{equation}
    \mathcal{D}_{+}\left(C_1\right)=\frac{1}{2 \pi} \int_C d \mathbf{g} \cdot \nabla_{\mathbf{g}} \arg \Delta_{+}(\mathbf{g})=\frac{1}{2 \pi}\left(\arg \Delta_{+}(\mathrm{B})-\arg \Delta_{+}(\mathrm{A})+2 n^{\prime} \pi\right)=\frac{n}{2} \in \mathbb{Z} / 2,
\end{equation}
which means the eigenfrequencies along the arc can precisely braid a multiple of half turns (see, e.g., Fig. \ref{Fig-ChainInvariant}(b)). Moreover, by taking the mirror images of $C_1$ about the $\gamma=0$ and $\kappa=0$ planes, we obtain another three image arcs $C_2, C_3, C_4$ as illustrated in Fig. \ref{Fig-ChainInvariant}(c), and the PF discriminant numbers along these arcs can be directly obtained according to the spectral symmetry:
\begin{equation}
\mathcal{D}_{+}\left(C_2\right)=\mathcal{D}_{+}\left(C_4\right)=-\mathcal{D}_{+}\left(C_3\right)=-\mathcal{D}_{+}\left(C_1\right).
\end{equation}
By concatenating the arcs into four loops, we observe that the PF discriminant numbers carried by the loops represent the net number of ELs outgoing from the line segment $\overline{\mathrm{AB}}$ on the four half-planes, respectively:
\begin{equation}
    \begin{aligned} & \mathcal{D}_{+}\left(C_1 \circ C_2^{-1}\right)=\mathcal{D}_{+}\left(C_3 \circ C_4^{-1}\right)=2 \mathcal{D}_{+}\left(C_1\right)=n, \\ & \mathcal{D}_{+}\left(C_2 \circ C_3^{-1}\right)=\mathcal{D}_{+}\left(C_4 \circ C_1^{-1}\right)=-2 \mathcal{D}_{+}\left(C_1\right)=-n,\end{aligned}
\end{equation}
and hence give the net number of EC points between the two points A,B along the high-symmetry line. 
Consequently, we find that the PF discriminant number $D_+(C_1)$ along the open arc $C_1$ is a quantized topological invariant associated with the symmetry-protected EC points, and twice the value of $D_+(C_1)$ determines the net number of EC points, $n = |2D_+(C_1)|$, situated between the two endpoints of the arc $C_1$ on the high-symmetry line $\gamma=\kappa=0$. If the arc only encompasses a single chain point, $D_+(C_1) = \pm 1/2$ and the sign determines the chirality of the chain point, which manifests as whether the ELs on the plane $\Pi_1$ emanate outward or inward from the chain point. Additionally, if either the $\gamma$-symmetry or $\kappa$-symmetry is broken, $D_+(C_1)$ is no longer quantized, confirming that $D_+(C_1)$ is a topological invariant exclusively associated with the ECs protected by these two symmetries.

\section{Symmetry correspondence between the quadratic and linearized eigenvalue problems}
For two coupled oscillators [Eq.~(3) of the main text], the linearized Hamiltonian is 
\begin{equation}\label{Hamiltonian}
\mathcal{H} = i\left( {\begin{array}{*{20}{c}}
0&{  \mathbf{1}}\\
{-\mathbf{M}^{-1}\mathbf{K}}&{{-\mathbf{M}^{-1}\mathbf{\Gamma }}}
\end{array}} \right),
\end{equation}
In the main text, we have shown that the QMP of the designed mechanical oscillators possesses two symmetries in the synthetic space spanned by the 3D vectors $\mathbf{g}=(\gamma,\chi,\kappa)$. In this section, we will discuss their correspondence to the symmetries of the linearized Hamiltonian $\mathcal{H}$ in the synthetic space. 

\subsection{Symmetry corresponding to the $\gamma$-symmetry of the QMP}
We first study the symmetry of the linearized Hamiltonian corresponding to the $\gamma$-symmetry of the QMP [Eq.~(4) of the main text]. The $\gamma$-symmetry is an antiunitary symmetry for the QMP. In general, an antiunitary operator $A=U\mathcal{K}\, (\omega\rightarrow\omega^*)$ ($U$ is the unitary part, $\mathcal{K}$ denotes complex conjugation) acting on the eigen vibration mode $\ket{\psi}=(a_1,\cdots,a_N)^\intercal$ can be expressed as $\ket{\psi}'=U\ket{\psi}^*$. Meanwhile, since the eigenstates for the linearized Hamiltonian is $\ket{\Psi}=(\ket{\psi},-i\omega\ket{\psi})^\intercal$, the antiunitary operation on $\ket{\Psi}$ manifests as
\begin{equation}
    \ket{\Psi}'=\begin{pmatrix}
    U\ket{\psi}^* \\ -i\omega^*U\ket{\psi}^*\end{pmatrix}=
    \begin{pmatrix} U & 0 \\ 0 & -U \end{pmatrix}
    \begin{pmatrix}
    \ket{\psi} \\ -i\omega\ket{\psi}\end{pmatrix}^*
    =\qty[\tau_z\otimes U]\mathcal{K}\ket{\Psi}.
\end{equation}
Therefore, we obtain the correspondence between the antiunitary symmetries of the QMP $Q(\omega)$ and the linearized Hamiltonian $\mathcal{H}$:
\begin{equation}
    UQ(\omega^*)^*U^{-1}=Q(\omega)\quad\Leftrightarrow\quad
    [\tau_z\otimes U]\mathcal{H}^*[\tau_z\otimes U]^{-1}=\mathcal{H}.
\end{equation}
For the $\gamma$-symmetry of our theoretical model, $Q(\omega^*,-\gamma)^*=Q(\omega,\gamma)$, the antiunitary operator is given by $A=\sigma_0(\gamma\rightarrow-\gamma,\omega\rightarrow\omega^*)\mathcal{K}$, namely the unitary part is $U=\sigma_0(\gamma\rightarrow-\gamma)$,  so we obtain the corresponding antiunitary symmetry for the linearized Hamiltonian: 
\begin{equation}
\left[\tau_z \otimes \sigma_0  \right] \mathcal{H}^*(\gamma) \left[\tau_z \otimes \sigma_0  \right]
=- i\left( {\begin{array}{*{20}{c}}
0&{ - \mathbf{1}}\\
{\mathbf{K}}&{{-\mathbf{\Gamma }}\left( {  \gamma } \right)}
\end{array}} \right) 
= {\mathcal H}\left( { -\gamma } \right).
\end{equation}

\subsection{Non-Hermitian latent symmetry corresponding to the $\kappa$-symmetry of the QMP}
Unlike the $\gamma$-symmetry for the QMP, the $\kappa$-symmetry [Eq.~(5) in the main text] cannot be reexpressed as a usual symmetry operation in the form $L\mathcal{H}L^{-1}=\mathcal{H}\ \text{or}\ \mathcal{H}^\dagger$ for the linearized Hamiltonian, which implies that some symmetry properties of mechanical systems would be more easily identified in the original QMP form than in the linearized form. 

Nevertheless, recent studies revealed that some spectral degeneracy of a linear Hamiltonian can be attributed to latent symmetries of the Hamiltonian~\cite{rontgen2021Latent,morfonios2021Flat}, going beyond the usual symmetries described by (anti)unitary operators that commute with the Hamiltonian. Here, we will extend the concept of latent symmetries to the non-Hermitian scenario and show that the $\kappa$-symmetry of the mechanical quadratic vibration equation is indeed equivalent to such a non-Hermitian latent symmetry of the linearized Hamiltonian.

\begin{definition}[Non-Hermitian latent symmetry]
For a non-Hermitian $N$-dimensional Hamiltonian $\mathcal{H}$, the degrees of freedom spanning the Hamiltonian are divided into two sets, i.e., the contributing set $S$ and the complement set $\bar{S}$.
A non-Hermitian latent symmetry of the Hamiltonian indicates that 
\begin{equation}
    \label{Eq-latentHert}L\left(\mathcal{H}^n\right)_{SS} L^{-1}=\left(\mathcal{H}^n\right)_{SS}^\dagger,\qquad \forall\  n\in\mathbb{N}
\end{equation}
where $L$ denotes the corresponding symmetry operator, and $\left(\mathcal{H}^n\right)_{SS}$ denotes the submatrix of $\mathcal{H}^n$ obtained by projecting $\mathcal{H}^n$ ($\mathcal{H}$ raised to $n$-th power) into the space of the contributing set of degrees of freedom.
\end{definition}
\vspace{8pt}
And according to the Cayley–Hamilton theorem, for any $n\geq N$, $\mathcal{H}^n$ can always be expressed as the matrix polynomial of $\mathcal{H}$ with lower powers $n<N$. Therefore, as long as $L\left(\mathcal{H}^n\right)_{SS} L^{-1}=\left(\mathcal{H}^n\right)_{SS}^\dagger$ holds for all $n<N$, the non-Hermitian latent symmetry is ensured.

On the other hand, we may define the isospectral reduction of a general non-Hermitian Hamiltonian~\cite{rontgen2021Latent,morfonios2021Flat}:
\vspace{8pt}
\begin{definition}[Isospectral reduction]
For a non-Hermitian Hamiltonian $\mathcal{H}=\begin{pmatrix}\mathcal{H}_{\bar{S}\bar{S}}&\mathcal{H}_{\bar{S}S}\\\mathcal{H}_{S\bar{S}}&\mathcal{H}_{SS}\end{pmatrix}$, the isospectral reduction of $\mathcal{H}$ over the contributing set $S$ of degrees of freedom is a matrix function of the eigenvalue $\omega$:
\begin{equation}
    \mathcal{R}_S(\mathcal{H},\omega)=\mathcal{H}_{SS}-\mH_{S\bar{S}}(\mH_{\bar{S}\bar{S}}-\omega I)^{-1}\mH_{\bar{S}S}.
\end{equation}
\end{definition}
\vspace{8pt}
The eigenvalues of the Hamiltonian satisfy
\begin{equation*}
    0=\det\qty[\mathcal{H}-\omega I]=\det\begin{pmatrix}\mathcal{H}_{\bar{S}\bar{S}}-\omega&\mathcal{H}_{\bar{S}S}\\\mathcal{H}_{S\bar{S}}&\mathcal{H}_{SS}-\omega\end{pmatrix}=\det\qty(\mH_{\bar{S}\bar{S}}-\omega)\det\qty[\mathcal{R}_S(\mathcal{H},\omega)-\omega].
\end{equation*}
Therefore, \textbf{provided that $\mH$ and $\mH_{\bar{S}\bar{S}}$ do not have common eigenvalues, the spectrum of $\mH$  coincides with that of the nonlinear eigen-equation for $\mathcal{R}_S(\mathcal{H},\omega)$:}
\begin{equation}\label{nonlinear eigeneq}
    \mathcal{R}_S(\mathcal{H},\omega)\ket{\psi}=\omega\ket{\psi}.
\end{equation}
This motivates calling $\mathcal{R}_S(\mathcal{H},\omega)$ an isospectral reduction of $\mH$. For the linearized Hamiltonian~\eqref{Hamiltonian} of mechanical oscillators, the isospectral reduction over the subspace of velocity $\ket{v}$ is 
$
    \mathcal{R}_{\ket{v}}(\mathcal{H},\omega)=-\mathbf{M}^{-1}\mathbf{\Gamma}+\frac{1}{\omega}\mathbf{M}^{-1}\mathbf{K}
$. Hence, the nonlinear eigen-equation~\eqref{nonlinear eigeneq} for $\mathcal{R}_{\ket{v}}(\mathcal{H},\omega)$ precisely gives the original quadratic eigen-equation of the oscillators:
\begin{equation}
    \qty[\mathcal{R}_{\ket{v}}(\mathcal{H},\omega)-\omega]\ket{\psi}=-\frac{1}{\omega}\mathbf{M}^{-1}(\omega\mathbf{\Gamma}-\mathbf{K}+\omega^2\mathbf{M})\ket{\psi}=0.
\end{equation}

It has been proved that the latent symmetry of a Hermitian Hamiltonian ~\eqref{Eq-latentHert} is equivalent to a symmetry of the isospectral reduction, $L\mathcal{R}_S(\mathcal{H},\omega)L^{-1}=\mathcal{R}_S(\mathcal{H},\omega)$~\cite{rontgen2021Latent,morfonios2021Flat}.
Here, as a generalization of the relation between the latent symmetry of the Hamiltonian and the symmetry of the isospectral reduction in the Hermitian case, we have the following theorem (see proof in Appendix):
\vspace{8pt}
\begin{theorem}[Alternative expression of non-Hermitian latent symmetry]\label{latent symmetry}
A non-Hermitian latent symmetry of the Hamiltonian $\mH$ over a contributing set $S$ is equivalent to a non-Hermitian symmetry of the isospectral reduction $\mathcal{R}_S(\mathcal{H},\omega)$ of $\mH$ over $S$:
\begin{equation}
    L\left(\mathcal{H}^n\right)_{SS} L^{-1}=\left(\mathcal{H}^n\right)_{SS}^\dagger,\quad \forall\  n\in\mathbb{N}\quad\Leftrightarrow\quad
    L\mathcal{R}_S(\mathcal{H},\omega)L^{-1}=\mathcal{R}_S(\mathcal{H},\omega^*)^\dagger,
\end{equation}
where $\mathbb{N}=\{1,2,3,\cdots\}$ denotes the set of natural numbers.
\end{theorem}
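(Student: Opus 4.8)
The plan is to funnel both sides of the equivalence through a single object: the $S$-block of the resolvent $(\mathcal{H}-\omega I)^{-1}$, used simultaneously as a generating function for the powers $(\mathcal{H}^{n})_{SS}$ and as the inverse of the shifted isospectral reduction. The first step is the Schur-complement identity obtained by block-inverting $\mathcal{H}-\omega I$: for every $\omega$ outside $\mathrm{spec}(\mathcal{H}_{\bar S\bar S})$ and $\mathrm{spec}(\mathcal{H})$,
\[
\qty[(\mathcal{H}-\omega I)^{-1}]_{SS}=\qty(\mathcal{R}_S(\mathcal{H},\omega)-\omega I)^{-1}.
\]
The second step is the Neumann expansion $(\mathcal{H}-\omega I)^{-1}=-\sum_{n\ge0}\mathcal{H}^{n}\omega^{-n-1}$, valid for $|\omega|>\lVert\mathcal{H}\rVert$, which upon projecting onto $S$ gives the bridge between the infinite family of block conditions and one $\omega$-dependent identity (the $n=0$ term being the identity on $S$, for which the latent condition is trivial):
\[
\qty(\mathcal{R}_S(\mathcal{H},\omega)-\omega I)^{-1}=-\sum_{n\ge0}(\mathcal{H}^{n})_{SS}\,\omega^{-n-1}.
\]

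Third, I would clear away the Hermitian-conjugation bookkeeping once and for all. Since $P_S$ is an orthogonal projector, $(\mathcal{H}^{n})_{SS}^{\dagger}=\big((\mathcal{H}^{\dagger})^{n}\big)_{SS}$, so the series above with $\mathcal{H}$ replaced by $\mathcal{H}^{\dagger}$ generates the matrices $(\mathcal{H}^{n})_{SS}^{\dagger}$. A one-line computation from the definition of $\mathcal{R}_S$, using $(A^{-1})^{\dagger}=(A^{\dagger})^{-1}$ and the block structure of $\mathcal{H}^{\dagger}$, yields $\mathcal{R}_S(\mathcal{H},\omega)^{\dagger}=\mathcal{R}_S(\mathcal{H}^{\dagger},\omega^{*})$, equivalently $\mathcal{R}_S(\mathcal{H},\omega^{*})^{\dagger}=\mathcal{R}_S(\mathcal{H}^{\dagger},\omega)$. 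Hence the right-hand condition of the theorem is simply $L\,\mathcal{R}_S(\mathcal{H},\omega)\,L^{-1}=\mathcal{R}_S(\mathcal{H}^{\dagger},\omega)$ for all admissible $\omega$.

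With these pieces in place, both directions are short. For $\Rightarrow$: assuming $L(\mathcal{H}^{n})_{SS}L^{-1}=(\mathcal{H}^{n})_{SS}^{\dagger}$ for every $n$, conjugate the generating series term by term (legitimate since it converges absolutely for $|\omega|$ large) to get $L\qty(\mathcal{R}_S(\mathcal{H},\omega)-\omega I)^{-1}L^{-1}=\qty(\mathcal{R}_S(\mathcal{H}^{\dagger},\omega)-\omega I)^{-1}$; invert both sides, cancel the scalar $\omega I$ (which commutes with $L$), and obtain $L\,\mathcal{R}_S(\mathcal{H},\omega)\,L^{-1}=\mathcal{R}_S(\mathcal{H}^{\dagger},\omega)=\mathcal{R}_S(\mathcal{H},\omega^{*})^{\dagger}$ near $\omega=\infty$, hence on the whole common domain since both sides are rational matrix functions of $\omega$ and that domain is connected. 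For $\Leftarrow$: run this backwards — subtract $\omega I$, invert, use the Schur identity to land on $L\qty[(\mathcal{H}-\omega I)^{-1}]_{SS}L^{-1}=\qty[(\mathcal{H}^{\dagger}-\omega I)^{-1}]_{SS}$, expand both sides in powers of $1/\omega$, and match the coefficient of $\omega^{-n-1}$ to recover $L(\mathcal{H}^{n})_{SS}L^{-1}=(\mathcal{H}^{n})_{SS}^{\dagger}$ for all $n$.

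The main obstacle is not any single computation but the care needed at the two analytic junctures: pinning down the domain on which $\mathcal{R}_S(\mathcal{H},\omega)$ and $\mathcal{R}_S(\mathcal{H},\omega)-\omega I$ are invertible (so that all inversions and the Schur identity are meaningful), and justifying the passages ``equality of Laurent series near $\omega=\infty$ $\Rightarrow$ equality of rational functions'' and ``equality of rational functions $\Rightarrow$ matching of expansion coefficients'', together with term-by-term conjugation of the series. These are standard once the excluded set $\mathrm{spec}(\mathcal{H}_{\bar S\bar S})\cup\mathrm{spec}(\mathcal{H})$ is isolated, so the genuine content of the proof is the Schur-complement/generating-function identity combined with the elementary conjugation relation $\mathcal{R}_S(\mathcal{H},\omega^{*})^{\dagger}=\mathcal{R}_S(\mathcal{H}^{\dagger},\omega)$.
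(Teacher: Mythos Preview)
Your argument is correct and is organized quite differently from the paper's. The paper never writes the $SS$-block of the \emph{full} resolvent; instead it derives a recurrence
\[
\mHss[n]=\mHss[n-1]\mH_{SS}+\sum_{m=0}^{n-2}\mHss[m]\,\mH_{S\bar S}(\mH_{\bar S\bar S})^{n-m-2}\mH_{\bar S S},
\]
uses it to show by induction that the latent condition is equivalent to the family of ``link'' relations $L\big[\mH_{S\bar S}(\mH_{\bar S\bar S})^{n}\mH_{\bar S S}\big]L^{-1}=\big[\mH_{S\bar S}(\mH_{\bar S\bar S})^{n}\mH_{\bar S S}\big]^{\dagger}$, and then matches these to $\mathcal{R}_S$ by expanding $(\mH_{\bar S\bar S}-\omega I)^{-1}$ either as a Neumann series (for $\Leftarrow$) or via Cayley--Hamilton (for $\Rightarrow$). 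Your route replaces all of this recursion by a single structural identity---the Schur complement $[(\mH-\omega I)^{-1}]_{SS}=(\mathcal{R}_S(\mH,\omega)-\omega I)^{-1}$---together with the observation $\mathcal{R}_S(\mH,\omega^{*})^{\dagger}=\mathcal{R}_S(\mH^{\dagger},\omega)$, so that both directions become a coefficient match in the Laurent expansion at $\omega=\infty$. What your approach buys is conceptual economy and a symmetric treatment of the two implications; what the paper's approach buys is that the $\Rightarrow$ direction, done through Cayley--Hamilton, is a finite polynomial identity requiring no convergence or analytic-continuation step.
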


\vspace{8pt}
For our theoretical model of the mechanical oscillators, we see that the $\kappa$-symmetry of the QMP $Q(\omega)=\omega^2\mathbf{M}-\mathbf{K}+i\omega\mathbf{\Gamma}$, i.e., $\sigma_xQ^\dagger(\omega,\kappa)\sigma_x=Q(\omega^*,-\kappa)$, serves as a non-Hermitian symmetry of the isospectral reduction $\mathcal{R}_{\ket{v}}(\mathcal{H},\omega)$ with the symmetry operator $L=\sigma_x (\kappa\rightarrow-\kappa)$. Therefore, according to Theorem~\ref{latent symmetry}, the $\kappa$-symmetry also indicates a \textbf{latent $\kappa$-reflection-adjoint symmetry} of the linearized Hamiltonian~\eqref{Hamiltonian}:
\begin{equation}\label{latent kappa-symmetry}
    \sigma_x(\mathcal{H}(-\kappa)^n)_{BR}\sigma_x=(\mathcal{H}(\kappa)^n)_{BR}^\dagger,\quad\forall\ n\in\mathbb{N},
\end{equation}
where $(\mathcal{H}^n)_{BR}$ denotes the bottom right submatrix of the 2-by-2 block matrix $\mathcal{H}^n$. The existence of this latent symmetry of $\mathcal{H}$ can be directly checked. From Eq.~\eqref{Hamiltonian} and Eq.~(3) in the main text, we obtain
\begin{equation}
\begin{aligned}    
    \mathcal{H}(\kappa)_{BR} &=-i\mathbf{M}^{-1}\mathbf{\Gamma}=-i\frac{\gamma}{2m_0}\sigma_z,\\
    (\mathcal{H}(\kappa)^2)_{BR} &=\mathbf{M}^{-1}\mathbf{K}-(\mathbf{M}^{-1}\mathbf{\Gamma})^2=\qty(\frac{\bar\kappa}{m_0}-\frac{\gamma^2}{4m_0^2})\sigma_0-\frac{\chi+\delta\chi/2}{m_0}\sigma_x+i\frac{\delta\chi}{2m_0}\sigma_y+\frac{\kappa}{2m_0}\sigma_z,\\
    (\mathcal{H}(\kappa)^3)_{BR} &=i\qty[\mathbf{M}^{-1}\mathbf{\Gamma}\mathbf{M}^{-1}\mathbf{K}+\mathbf{M}^{-1}\mathbf{K}\mathbf{M}^{-1}\mathbf{\Gamma}+(\mathbf{M}^{-1}\mathbf{\Gamma})^3]=i\qty[\frac{\kappa\gamma}{2m_0^2}\sigma_0+\qty(\frac{\bar\kappa\gamma}{m_0^2}+\frac{\gamma^3}{8m_0^3})\sigma_z],
\end{aligned}
\end{equation}
which all satisfy Eq.~\eqref{latent kappa-symmetry} and further guarantee that every higher-order term $(\mathcal{H}^n)_{BR}$ ($n>3$) also satisfies Eq.~\eqref{latent kappa-symmetry} thanks to the Cayley-Hamilton theorem, thereby confirming the existence of the latent symmetry.

As we have shown in the main text, on the high-symmetry plane $\kappa=0$, the $\kappa$-symmetry ensures that the eigenfrequencies are either real or form complex conjugate pairs, corresponding to the $\kappa$-exact and $\kappa$-broken phases, respectively. Consequently, the ELs are fixed on the $\kappa=0$ plane, forming the phase transition boundaries of the exact and broken phases on the plane.




\section{The effective two-band Hamiltonian in quasi-degenerate approximation}
In this section, we will prove that in the quasi-degenerate approximation, the 2-by-2 QEP can be reduced to an effective two-band tight-binding model, and the $\gamma-$ and $\kappa-$symmetries of the QMP can be mapped to non-Hermitian space group symmetries in crystals.
Consider two coupled oscillators with equal masses $m_1=m_2=m_0$. When the two diagonal components of $\mathbf{K}$, $K_{11}$, $K_{22}$ are close enough and all other parameters in $\mathbf{K}$ and  $\mathbf{\Gamma}$ are sufficiently small, the two positive eigenfrequencies of the system will be quasi-degenerate $\omega=\omega_0+\delta\omega$ ($\delta\omega\ll\omega_0$) near the positive characteristic frequency $\omega_0= \sqrt{\frac{K_{11}+K_{22}}{2m_0}}$. Thus, the QEP can be approximated as a linear eigenvalue equation by omitting all the higher-order terms of $\delta\omega$: 
\begin{equation}\label{eq-H2reduced}
\begin{aligned}
{m_0\left( {{\omega _0} + \delta \omega } \right)^2}\left | \psi \right \rangle   -\mathbf K \left | \psi \right \rangle   + i({\omega_0} + \delta \omega )\mathbf \Gamma \left | \psi \right \rangle &= 0\\
m_0\left( {{\omega _0}^2 + 2{\omega _0}\delta \omega } \right)\left | \psi \right \rangle   - \mathbf K \left | \psi \right \rangle   + i({\omega _0} + \delta \omega ) \mathbf \Gamma \left | \psi \right \rangle   &= 0\\
\left( {2{\omega _0}m_0+ i{\bf{\Gamma }}} \right)\delta \omega \left | \psi \right \rangle   + \left( {{\omega _0}^2m_0 - {\bf{K}} + i{\omega _0}{\bf{\Gamma }}} \right)\left | \psi \right \rangle   &= 0\\
{\left( {2{\omega _0}m_0+ i{\bf{\Gamma }}} \right)^{ - 1}}\left[ {\left( {2{\omega _0}{ m_0 } + i{\bf{\Gamma }}} \right){\omega _0} - {\bf{K}} - {\omega _0}^2{ m_0 }} \right]\left | \psi \right \rangle   + \delta \omega \left | \psi \right \rangle   &= 0\\
{\left( {2{\omega _0} m_0+ i{\bf{\Gamma }}} \right)^{ - 1}}\left( {  {\bf{K}} + {\omega _0}^2m_0} \right)\left | \psi \right \rangle -(\omega_0+\delta \omega) \left | \psi \right \rangle &=0.
\end{aligned}
\end{equation}

Since we have assumed that $\mathbf{
\Gamma}$ is small, i.e., every element of the matrix $\mathbf{\Gamma}$ satisfies $ {2{\omega _0}{m_0} \gg {|\bf{\Gamma }}}_{mn}|$, Eq.~\eqref{eq-H2reduced} can be simplified as 
\begin{equation}
\begin{aligned}
{{\left( {2{\omega _0} m_0+ i{\bf{\Gamma }}} \right)}^{ - 1}}\left( {{\bf{K}} + {\omega _0}^2m_0} \right){\left | \psi \right \rangle}   &= \omega {\left | \psi \right \rangle} \\
{{\left( {1 + \frac{{i{\bf{\Gamma }}}}{{2{\omega _0}m_0}}} \right)}^{ - 1}}{{\left( {2{\omega _0}}m_0 \right)}^{ - 1}}\left( {{\bf{K}} + {\omega _0}^2m_0} \right){\left | \psi \right \rangle}   &= \omega {\left | \psi \right \rangle}\\
\left( {1 - \frac{{i{\bf{\Gamma }}}}{{2{\omega _0}m_0}}} \right)\left( {\frac{{\bf{K}}}{{2{\omega _0}m_0}} + \frac{{{\omega _0}}}{2}} \right){\left | \psi \right \rangle}   &= \omega {\left | \psi \right \rangle}\\
\left( {\frac{{{\omega _0}}}{2} - \frac{{i{\bf{\Gamma }}}}{4m_0} + \frac{{\bf{K}}}{{2{\omega _0}m_0}} - \frac{{i{\bf{\Gamma}\mathbf{K}}}}{{4{(\omega _0m_0)}^2}}} \right){\left | \psi \right \rangle}   &= \omega {\left | \psi \right \rangle}.
\end{aligned}
\end{equation}
We can further express $\mathbf{K}=m_0\omega_0^2 + \delta \mathbf{K}$, where every element of $\delta \mathbf{K}$ is a small number, i.e., $2m_0 \omega_0^2 \gg |\delta \mathbf{K}_{mn}|$, and by omitting the second-order small terms, the above equation becomes
\begin{equation}
\begin{aligned}
    &\left( {\omega_0 - \frac{{i{\bf{\Gamma }}}}{2m_0} + \frac{{\delta\bf{K}}}{{2{\omega _0}m_0}}} \right){\left | \psi \right \rangle} = \omega {\left | \psi \right \rangle}\\
        \Rightarrow \quad&
        \left( { \frac{{\delta\bf{K}}}{{2{\omega _0}m_0}} - \frac{{i{\bf{\Gamma }}}}{2m_0}} \right){\left | \psi \right \rangle} = (\omega-\omega_0) {\left | \psi \right \rangle}\\
         \Rightarrow  \quad&
         \mathcal{H}_\text{eff} \left| \psi \right \rangle = \frac{{1}}{{2{\omega _0}m_0}} \left( { \delta\mathbf{K} - i\omega_0\mathbf{\Gamma }} \right){\left | \psi \right \rangle} = \delta \omega \left| \psi \right \rangle.
\end{aligned} 
\end{equation}
Hence, we obtain the effective two-band linearized Hamiltonian for two generic coupled oscillators under the quasi-degenerate approximation:
\begin{equation}\label{effective hamiltonian}
    \mathcal{H}_\text{eff}=\frac{{1}}{{2{\omega _0}m_0}} \left( { \delta\mathbf{K} - i\omega_0\mathbf{\Gamma }} \right).
\end{equation}

Now, we consider the theoretical model [Eq.~(3) in the main text]. Assuming $\kappa$, $\chi$, $\gamma$, $\delta \chi$ are much smaller than $\bar\kappa$, we can choose  $\omega_0=\sqrt{\bar\kappa/m_0}$. Then, by substituting $\delta\mathbf{K} = \mathbf{K}-m_0\omega_0^2$ and $\mathbf{\Gamma}$ of Eq.~(3) in the main text into Eq.~\eqref{effective hamiltonian}, we can obtain the effective two-band Hamiltonian for the theoretical model as:
\begin{equation}
\begin{aligned}\label{theoretical effective}
\mathcal{H}_\text{eff-th} 
&= 
\frac{1}{2\omega_0m_0}\left( {\begin{array}{*{20}{c}}
\kappa/2-i\omega_0\gamma/2 &{{-\chi} }\\
{{-\chi} - \delta\chi }&-\kappa/2 +i\omega_0\gamma/2
\end{array}} \right) \\
&= \frac{1}{4\omega_0m_0}\left[\left( {\kappa  - i{\omega _0}\gamma } \right){\sigma_z} +(\delta\chi-2 {\chi}){\sigma_x} + i\delta\chi {\sigma_y}\right].
\end{aligned}
\end{equation}
Therefore, in the quasi-degenerate limit, the $\gamma$- and $\kappa$-symmetries of the QMP reduce to the following symmetries of $\mathcal{H}_\mathrm{eff-th}$:  
\begin{gather}
      \mathcal{H}_\text{eff-th}^* (\gamma) = \mathcal{H}_\text{eff-th} (-\gamma),\label{C2T}\\
        \sigma_x \mathcal{H}_\text{eff-th}^\dagger (\kappa)\sigma_x = \mathcal{H}_\text{eff-th} (-\kappa) \label{m-dagger}.
\end{gather}
If we map the synthetic space to the 3D momentum space with regarding the three synthetic parameters $(\gamma,\chi,\kappa)$ as the components of the momentum, \textbf{the reduced $\gamma$-symmetry~\eqref{C2T} manifests as a combined symmetry of  $C_2$-rotation about the $\gamma$ axis (i.e., $\kappa=\chi=0$) and time-reversal: $C_{2\gamma}T$,  in the momentum space. And the reduced $\kappa$-symmetry~\eqref{m-dagger} turns to be an effective non-Hermitian spatial symmetry, termed mirror-adjoint symmetry~\cite{zhang2022symmetry} about the $\kappa$ axis ($M_\kappa\hbox{-}\dagger$)}, which means after taking mirror reflection in the $\kappa$ direction, the effective Hamiltonian converts to its Hermitian-adjoint $\mathcal{H}_\mathrm{eff}^\dagger$. As we have shown in Ref.~\cite{zhang2022symmetry}, the presence of both $C_2T$ and mirror-adjoint symmetries about two orthogonal directions can protect the formation of an orthogonal exceptional chain, which also explains the formation of the exceptional chain in our theoretical mechanical model in a perturbative manner.


For the experimental model [Eq.~(6) of the main text], we can also obtain the effective two-band linearized Hamiltonian from Eq.~\eqref{effective hamiltonian} in the quasi-degenerate limit:
\begin{equation}
\begin{aligned}
\mathcal{H}_\text{eff-exp} 
&= 
\frac{1}{2\omega_0m_0}\left( {\begin{array}{*{20}{c}}
\chi-i\omega_0(\gamma_0+\gamma) &{{-\chi} }\\
{{-\chi} - \delta\chi }&-\kappa+\chi +i\omega_0\gamma_0
\end{array}} \right) \\
&= \mathcal{H}_\text{eff-th}+\frac{\chi-\kappa/2-i\omega_0(\gamma_0+\gamma/2)}{2\omega_0m_0}\sigma_0,
\end{aligned}
\end{equation}
where we have identified the characteristic frequencies of the two models, i.e., $\omega_0=\sqrt{\bar\kappa/m}=\sqrt{\kappa_0/m}$. We see that the effective two-level Hamiltonian for the experimental model is only different from the effective theoretical model~\eqref{theoretical effective} by a $\sigma_0$ term. Since the $\sigma_0$ term does not affect the difference between the two eigenvalues of the effective Hamiltonian, the degeneracies of the experimental model are identical to those of the theoretical model under the quasi-degenerate approximation. Nevertheless, we stress that the EC observed in the experiment is not an approximate result in the quasi-degenerate limit. As we analyzed in the main text (also see Section~\ref{subspace symmetries} in SM), the two subspace symmetries of the experimental model can exactly protect the EC to remain stable beyond the quasi-degenerate regime.

\section{Experimental setup and measurement}
In this section, we will show how to construct the experimental model [Eq.~(6) in the main text] and describe how to realize the nonreciprocity and controllable synthetic dimensions in experiments.

\subsection{Derivation of the experimental model}
\begin{figure*}[t!]
\includegraphics[width=0.9\textwidth]{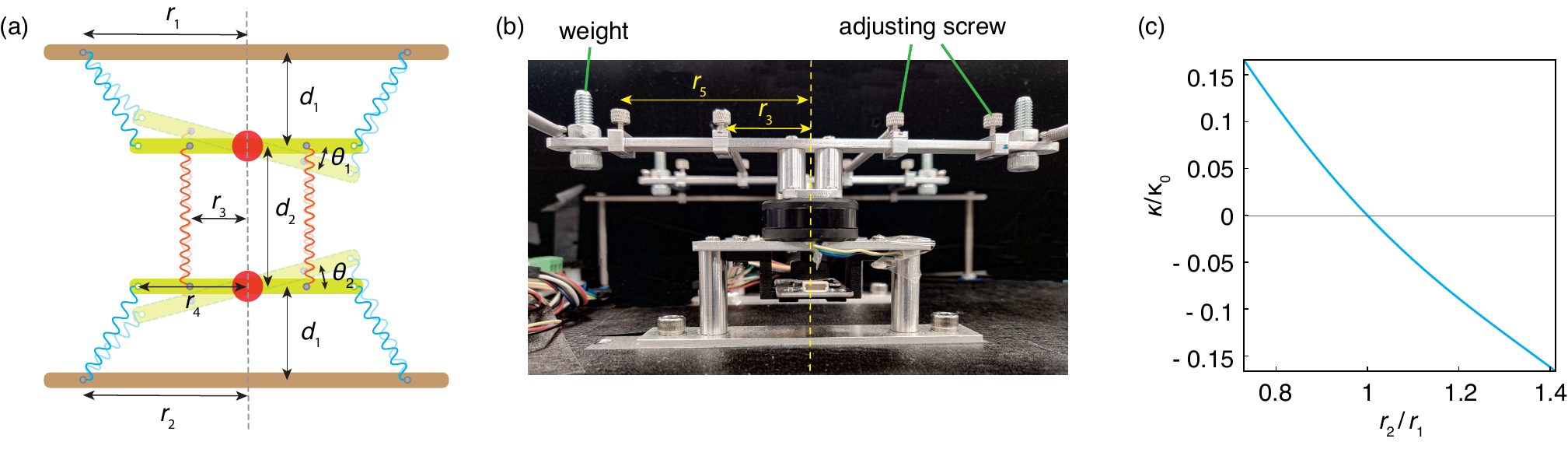}
\caption{\label{Fig-expSetup}
(a) The geometric parameters of the experimental setup. (b) The photo of the experimental setup. (c)
$\kappa/\kappa_0$ as a function of $r_2/r_1$ and the other geometric parameters used are: $r_1 = 0.085$m, $r_4 = 0.0865$m, $d_1=d_2=d_3=0.1$m, and $l_1=0.05$m.
}
\end{figure*}

Figure~\ref{Fig-expSetup} illustrates our experimental system composed of two coupled active rotational harmonic oscillators. A rotational arm (green) anchored on the motor (red) is attached to the rigid beam (brown) by two identical springs (cyan) and another two identical springs (orange) connect the two rotational arms in parallel.
The torques on the motors can be divided into a conservative part exerted by the springs connecting the arms and a non-conservative part induced by the inherent damping as well as by the computer-numerically-control Ampere force inside the motors. The conservative torques on the two motors can be directly obtained from the potential energies, $V_1$, $V_2$, $V_3$, stored in the two upper cyan springs, the two lower cyan ones, and the two middle orange ones, respectively, which can always be expressed as the following quadratic forms for small-angle vibrations $\theta_1,\theta_2\ll1$:
\begin{equation}
    \begin{aligned}
    V_1 &=\frac{1}{2}\kappa_0\theta_1^2+V_1^0,\\
    V_2 &=\frac{1}{2}(\kappa_0-\kappa)\theta_2^2+V_2^0,\\
    V_3 &=\frac{1}{2}\chi(\theta_1-\theta_2)^2+V_3^0,
    \end{aligned}
\end{equation}
where $V_1^0$, $V_2^0$, and $V_3^0$ represent the potential energy of the springs at the equilibrium position $\theta_1=\theta_2=0$. 
The absence of linear terms of $\theta_1$, $\theta_2$ in the above three potentials is due to the fact that the net torque of each group of springs equals zero at the equilibrium position, i.e., $\partial_{\theta_i} V_j\big|_{\theta_1=\theta_2=0}=0$. Consequently, the conservative torque on each motor can be obtained from the derivatives of the total potential energy $V(\theta_1,\theta_2)=V_1+V_2+V_3$:
\begin{equation}\label{conservative torque}
    \tau^\mathrm{c}_{1}=-\frac{\partial V}{\partial\theta_1}=-\kappa_0\theta_1-\chi(\theta_1-\theta_2),\qquad
    \tau^\mathrm{c}_{2}=-\frac{\partial V}{\partial\theta_2}=-(\kappa_0+\kappa)\theta_2+\chi(\theta_1-\theta_2).
\end{equation}

The nonreciprocal and velocity-dependent effects are induced by the non-conservative torques applied on the two motors,
\begin{equation} \label{EqS-nctorque}
    \tau_1^\mathrm{nc} = \gamma_1 \frac{d \theta_1}{dt}=(\gamma_0+\gamma/2) \frac{d \theta_1}{dt}, \quad
    \tau_2^\mathrm{nc} = \delta \chi \theta_1 + \gamma_2 \frac{d \theta_2}{dt} =\delta \chi \theta_1 + (\gamma_0-\gamma/2) \frac{d \theta_2}{dt},
\end{equation}
where $\delta \chi$ characterizes the strength of the nonreciprocal effect. The total velocity-dependent damping effects on motors 1 and 2 are $\gamma_{1,2}=\gamma^\mathrm{in}+\gamma^\mathrm{ex}_{1,2}$, where $\gamma^\mathrm{in}$ represents the inherent damping of the two motors and $\gamma_{1,2}^\mathrm{ex}$ represents the tunable damping controlled by the Ampere force. We denote $\gamma_{0}$ as the common loss of the two motors, and $\gamma$ represents the difference of Ampere force applied on these two motors. 
From Eqs.~(\ref{conservative torque},\ref{EqS-nctorque}), we obtain
the matrices in the SDE of the experimental system
\begin{equation}\label{Eq-ExperModelOrigin}
\mathbf K = \left( {\begin{array}{*{20}{c}}
{{\kappa_{0}+\chi}}&{{-\chi}}\\
{{-\chi-\delta\chi}}&{{\kappa_{0}-\kappa+\chi}}
\end{array}} \right), 
\quad  
\mathbf \Gamma =\left( {\begin{array}{*{20}{c}}
{{ \gamma_0 + \gamma/2}}&{0}\\
{0}&{{\gamma_0 - \gamma/2}}
\end{array}} \right).
\end{equation}

\subsection{Parameter control in the experiments}
The coefficients in the potential energy equations are determined by the geometric parameters of the experimental setup \ref{Fig-expSetup}. The potential energies can be expressed as
\begin{equation}
\begin{aligned}
    V_1 &= \frac{1}{2} k_1 \left[(\sqrt{(r_4 \cos\theta_1 - r_1)^2 + (r_4 \sin\theta_1 - d_1)^2} - l_1)^2 + (\sqrt{(-r4 \cos\theta_1 + r_1)^2 + (-r_4 \sin\theta_1 - d_1)^2} - l_1)^2\right],\\
    V_2 &= \frac{1}{2} k_1 \left[(\sqrt{(r_4 \cos\theta_2 - r_2)^2 + (r_4 \sin\theta_2 - d_1)^2} - l_1)^2 + (\sqrt{(-r4 \cos\theta_2 + r_2)^2 + (-r_4 \sin\theta_2 - d_1)^2} -l_1)^2\right],\\
    V_3 &= \frac{1}{2} k_2 \left[(\sqrt{r_3^2(\cos\theta_1-\cos\theta_2)^2 + (d_2+r_3\sin\theta_1-r_3\sin\theta_2)^2 } - l_2)^2 \right. \\
   &\qquad\qquad\left. +(\sqrt{r_3^2(\cos\theta_1-\cos\theta_2)^2 + (d_2-r_3\sin\theta_1+r_3\sin\theta_2)^2 } - l_2)^2\right],    
\end{aligned}
\end{equation}
where $k_{1,2}$ and $l_{1,2}$ denote the stiffness and free length of the cyan ($k_1$, $l_1$) and orange ($k_2$, $l_2$) springs. 
By taking derivatives of the above potentials and keeping the linear terms, we can obtain the coefficients as
\begin{equation}\label{EqS-parafun}
\begin{aligned}
    \chi &= 2k_2r_3^2,\\
    \kappa_0 &= 2 k_1 r_4 \left[r_1 - \frac{l_1 r_1}{\sqrt{d_1^2 + (r_1 - r_4)^2}} + \frac{d_1^2 l_1 r_4}{(d_1^2 + (r_1 - r_4)^2)^{3/2}}\right],\\
    \kappa_0-\kappa &= 2 k_1 r_4 \left[r_2 - \frac{l_1 r_2}{\sqrt{d_1^2 + (r_2 - r_4)^2}} + \frac{d_1^2 l_1 r_4}{((d_1^2 + (r_2 - r_4)^2)^{3/2}}\right].
\end{aligned}
\end{equation}
From the above equations, we see that $\chi$ is a function of $r_3$, and $\kappa$ can be controlled by fixing $r_1$ and tuning $r_2$. And Fig.~\ref{Fig-expSetup}(c) shows that  $\kappa$ decreases monotonically with  $r_2$ in the range $r_2/r_1 \in(0.8,1.4)$.

Therefore, in the experiment, we can control the value of $\kappa$ by tuning $r_2$. As shown in Fig.~\ref{Fig-expSetup}(b), the ends of each spring were connected with the arms and beams by two adjusting screws, ensuring that the connection positions $r_3$ and $r_2$ could be continuously tuned. 
Weights (bolts and nuts) are mounted on the ends of the rotational arm to control the moment of inertia of the motor and hence control the resonant frequency of each individual oscillator. Note that since the weights of the adjusting screws are not negligible, varying the connection position $r_3$ will affect the moment of inertia $m_0$. To keep the moment of inertia unchanged when tuning $r_3$, we added two more movable screws at position $r_5$ on each arm so that $(r_3^2 +r_5^2)$ remains a constant.  

While the coefficients associated with the conservative torques are determined by the geometric parameters, the coefficients associated with the non-conservative torques are controlled by the Ampere forces whose magnitudes are determined by the rotation angles or velocities of the motors.  
The instantaneous rotation angles $\theta_n(t)$ and angular velocities $\frac{d}{dt}\theta_n(t)$ of the two motors are measured by a sensor placed below the motors and collected by their microcontrollers. The microcontrollers can be programmed to output PWM signals to apply customized torques on the motors, effectively realizing the required nonreciprocity and damping effects. 

The nonreciprocal interaction between the two rotational arms can be realized by a control loop that measures the rotation angle of motor 1 in real-time and applies a nonconservative torque, $ \delta\chi \theta_1(t)$, on motor 2 based on the measured angle. 
In addition, to actively control the damping matrix, we 
design a control loop that can apply additional torques on the two motors proportional to their own velocities, $\gamma_n d\theta_n(t)/dt $. By tuning the coefficients of the two velocity-dependent torques, we can control the relative damping, $\gamma = \gamma_1 - \gamma_2$, as the third synthetic parameter dimension.  

\subsection{Retrieval of the parameters using Green's function method}
\begin{figure*}[h!]
\includegraphics[width=0.8\textwidth]{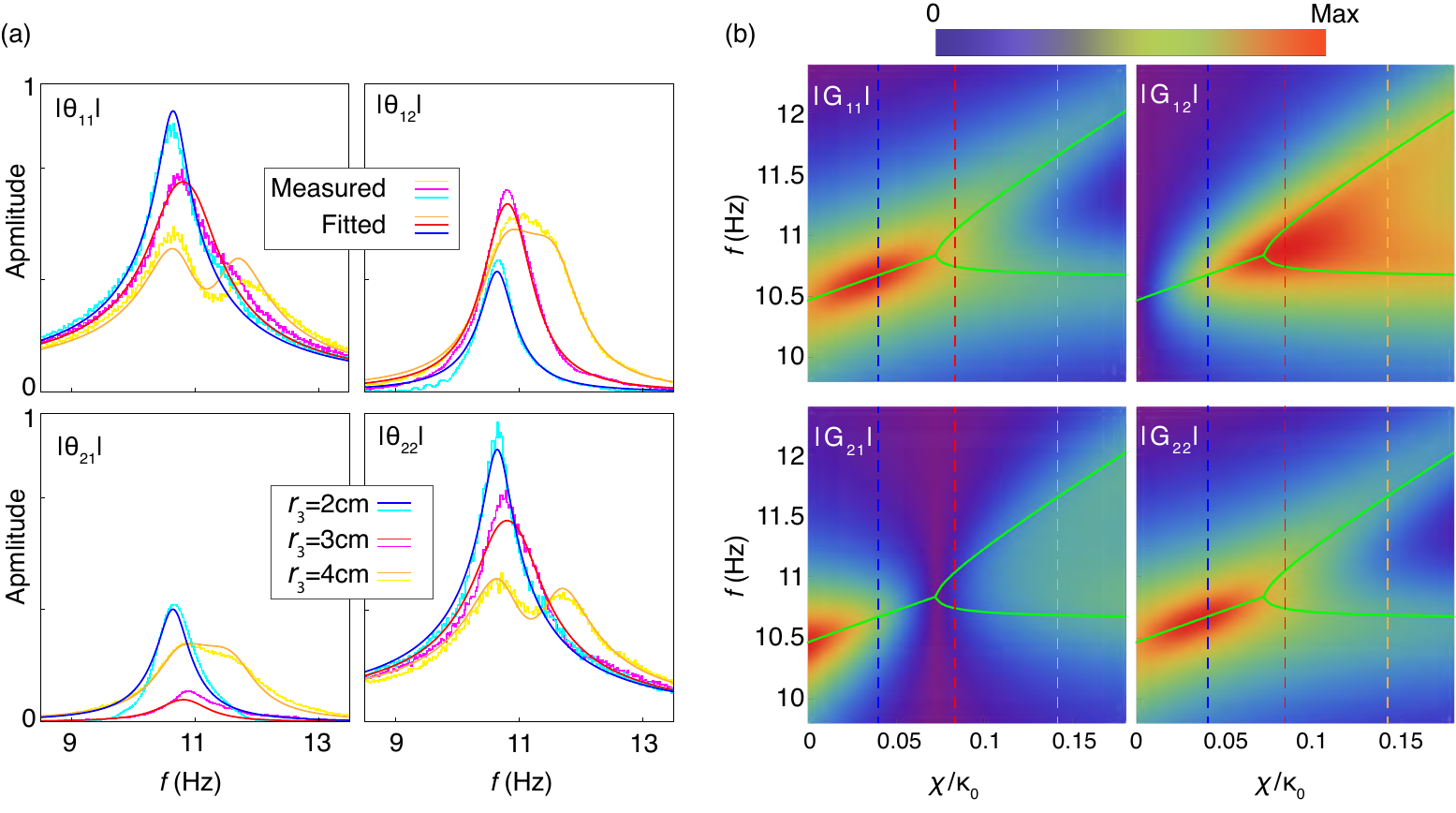}
\caption{\label{Fig-fitchi0} 
(a) The normalized 
measured response $|\theta_{mn}|$ at oscillator $m$ under the the excitation applied to oscillator $n$, and the fitted results for connection position $r_3=2,3,4$ cm. (b) The intensity graphs represent the Green's function distribution as functions of frequency $f=\omega/(2\pi)$ and coupling strength $\chi/\kappa_0$. The green line represents the real parts of the eigenfrequencies, and the dashed lines denote the three selected spectra in (a). 
The other fitted parameters used in the calculation are: $\kappa_0/m_0=4330$, $\gamma_0/\sqrt{m_0 \kappa_0}=0.085$, $\gamma=0$, $\kappa=0$, and $\delta\chi=-0.073 \kappa_0$.
}
\end{figure*}

\begin{figure}[h!]
\includegraphics[width=0.8\textwidth]{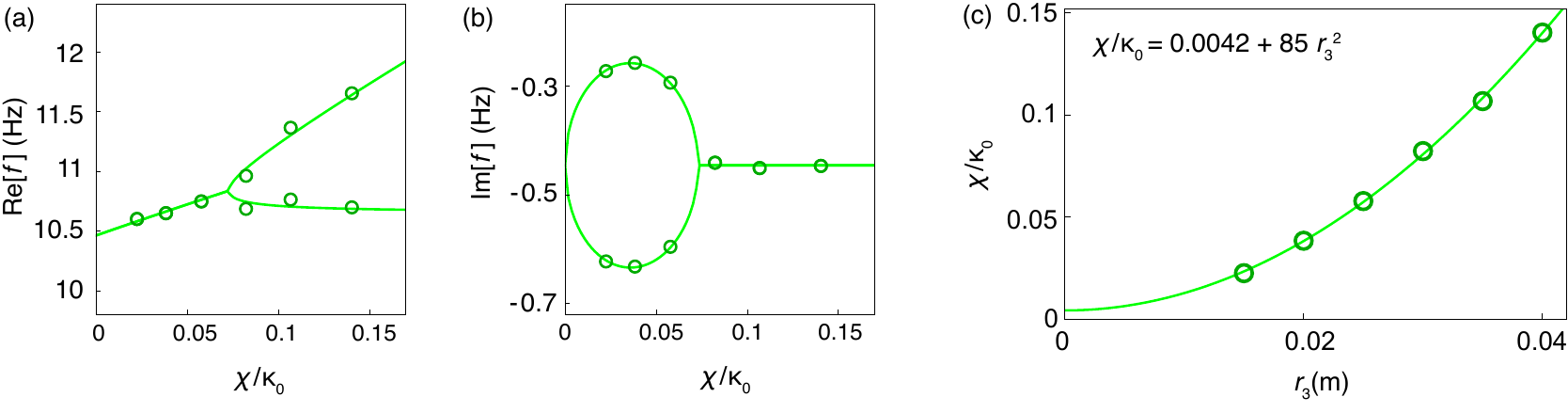}
\caption{\label{Fig-chi0function}
The measured (a) real and (b) imaginary parts of eigenfrequencies. 
The retrieved parameter $\chi$ is a parabolic function of the connection position $r_3$. }
\end{figure}

In the above section, we explained how to control the nonreciprocity $\delta\chi$ and synthetic parameters $\mathbf{g}=(\gamma,\chi,\kappa)$ in experiment, yet the values of these parameters still need to be retrieved by Green's function method. 
During experiments, we excited the motor $n (n=1,2)$ using a chirp signal covering a frequency range $2-22$ Hz and recorded the vibration amplitudes of oscillators 1 and 2 as $\theta_{1n} (t)$ and $ \theta_{2n} (t)$, respectively. After the Fourier transformation, we obtained the vibration amplitudes in the frequency domain. It is known that the linear response in the frequency domain can be expressed as 
\begin{equation}
    Q (\omega) |p_n \rangle = c |s_n \rangle
    \quad\Rightarrow\quad |p_n \rangle = c\, Q (\omega)^{-1}  |s_n \rangle,
\end{equation}
where $c$ is a fitting coefficient, $|p_n\rangle = (\theta_{1n},\theta_{2n})^\intercal=c(G_{1n}, G_{2n})^\intercal$ represents the probed vibration amplitudes of the two oscillators driven by the source $|s_n\rangle$ and $|s_1\rangle = (1,0)$ ($|s_2\rangle = (0,1)$) represents excitation applied on oscillator 1 (2). Therefore, the Green's function of the system (also known as transfer function for mechanical systems~\cite{tisseur2001Quadratic}) is just
\begin{equation}
    G(\omega)=Q(\omega)^{-1}=\qty[\omega^2\mathbf{M}  -  \mathbf{K} + i\omega \mathbf{\Gamma}]^{-1}.
\end{equation}
Consequently, by fitting the four elements, $G_{mn}(\omega)$, as functions of $\omega$ via the least-squares method, we can retrieve the parameters in $Q(\omega)$. 

Now let us explain how the band structures in Fig.~2 of the main text are systematically retrieved using the Green's function.  In order to precisely control the value of $\chi$ by varying $r_3$ in experiments, we first need to obtain the exact relation between $\chi$ and $r_3$.  
For this purpose, we first 
tuned the programmed damping effects applied on the two motors to be the same so that $\gamma=0$. We also fixed the positions of the springs attached to the beams at $r_1 =r_2 = 8.5$ cm so that $\kappa=0$. 
Then fixing $r_3$ at several different values, we excited the motors using chirp signals and recorded the four spectra $\theta_{mn}$ as shown in Fig.~\ref{Fig-fitchi0}(a). Using the expression of $|G_{mn}(\omega)|$ to fit the measured frequency spectrum $|\theta_{mn}(\omega)|$, we obtained the constant parameters in $Q(\omega)$: $\kappa_0/m_0=4330$, $\gamma_0/\sqrt{m_0 \kappa_0}=0.085$, and $\delta\chi=-0.073 \kappa_0$. And we also obtained
the values of the coupling strength $\chi$ for three different groups of curves:
$\chi(r_3=0.02)=0.038\kappa_0$, $\chi(r_3=0.03)=0.082\kappa_0$, and $\chi(r_3=0.04)=0.140\kappa_0$. As shown in Fig.~\ref{Fig-fitchi0}(a), the fitted spectra agree well with the measured spectra.   
Using the retrieved constant parameters, in Fig.~\ref{Fig-fitchi0}(b), we plot the linear response spectra as functions of frequency $f=\omega/(2\pi)$ and $\chi$, showing that the the loci of the resonances are in good agreement with the real parts of the eigenfrequencies.

We measured more response spectra for different $r_3$ and retrieved the values of $\chi$ using the Green's function method. Substituting the retrieved parameters into $Q(\omega)$, we obtained the measured eigenfrequencies, and plotted them in Fig.~\ref{Fig-chi0function}(a) and (b) by dots. Note that the dots deviate slightly from lines, because the green lines are plotted using a constant parameter $\kappa_0/m_0 = 4330$ while the dots are calculated by setting the $\kappa_0/m_0$ as a variable parameter in the fitting model to make the fitting spectra agree better with the measured spectra.
Then,  we fitted the retrieved coupling strength $\chi$ as a function of the connection position $r_3$ with the result shown in Fig.~\ref{Fig-chi0function}(c). We see that the fitted function is a parabola passing through the origin, consistent with our theoretical analysis \eqref{EqS-parafun}. 

Since we have obtained the precise relation between $\chi$ and $r_3$ in Fig.~\ref{Fig-chi0function}(c), $\chi$ can be regarded as a fixed number for certain $r_3$ when fitting different lines in Figs.~2(e) and (f) of the main text. In other words, the measured dots on different lines in Figs.~2(e) and (f) of the main text are retrieved by fixing $\delta \chi = -0.073 \kappa $, $\gamma_0/\sqrt{m_0 \kappa_0} = 0.085$, and $\chi/\kappa_0 = 0.042+ 85 {r_3}^2$.

\subsection{Derivation of the subspace symmetries of the experimental model}\label{subspace symmetries}
Owing to the intrinsic dissipation of the oscillators in experiments, here we show that the $\kappa$- and $\gamma$-symmetries of the theoretical model are reduced to two subspace symmetries. By shifting the complex frequencies by an imaginary displacement $i\omega_i$: $\omega=\tilde{\omega}+i\omega_i$, the QMP for the experimental system can be rewritten as
\begin{equation}\label{exp QMP}
    Q=\qty[\tilde{\omega}^2\mathbf{M}-(\mathbf{K}+\omega_i\tilde{\mathbf{\Gamma}}+\omega_i^2\mathbf{M}+\omega_i\gamma_0\sigma_0)+i\tilde{\omega}\tilde{\mathbf{\Gamma}}]+i\tilde{\omega}(2\omega_i m_0+\gamma_0)\sigma_0,
\end{equation}
with $\tilde{\mathbf{\Gamma}}=\mathbf{\Gamma}-\gamma_0\sigma_0=\sigma_z\gamma/2$. By setting $\omega_i=-\frac{\gamma_0}{2m}$, the last term in \eqref{exp QMP} vanishes and the QMP for the experimental model is altered to
\begin{equation}
    Q(\omega,\gamma,\chi,\kappa)=\tilde{Q}(\tilde{\omega},\gamma,\chi,\kappa)=\tilde{\omega}^2\mathbf{M}-\underbrace{\qty(\mathbf{K}-\frac{\gamma_0}{2m_0}\tilde{\mathbf{\Gamma}}-\frac{\gamma_0^2}{4m_0}\sigma_0)}_{\displaystyle \tilde{\mathbf{K}}}+i\tilde{\omega}\tilde{\mathbf{\Gamma}},
\end{equation}
which possesses the effectively loss-gain balance dissipation matrix $\tilde{\Gamma}=\sigma_z\gamma/2$ and the effective stiffness matrix
\begin{equation}
   \tilde{\mathbf{K}}=\begin{pmatrix}
        \tilde{\kappa}_0+\chi-\frac{\gamma_0}{4m_0}\gamma & -\chi \\
        -\chi-\delta\chi & \tilde{\kappa}_0-\kappa+\chi+\frac{\gamma_0}{4m_0}\gamma
    \end{pmatrix}
\end{equation}
with $\tilde{\kappa}_0=\kappa_0-\frac{\gamma_0^2}{4m_0}$. It is seen that $\tilde{Q}(\tilde{\omega},\gamma,\chi,\kappa)$ violates the $\kappa$- and $\gamma$-symmetries of the theoretical model. Nevertheless, $\tilde{Q}(\tilde{\omega},\gamma,\chi,\kappa)$ on two intersecting 2D planes are invariant under the following transformations, respectively:
\begin{align}
   \gamma=0\ \text{plane:}&\quad \tilde{Q}^*(\tilde{\omega},\gamma=0,\kappa,\chi)=\tilde{Q}(\tilde{\omega}^*,\gamma=0,\kappa,\chi)\\
   \kappa=\frac{\gamma_0}{2m_0}\gamma\ \text{plane:}&\quad \sigma_x\tilde{Q}^\dagger(\tilde{\omega},\kappa=\frac{\gamma_0}{2m_0}\gamma,\chi)\sigma_x=\tilde{Q}(\tilde{\omega}^*,\kappa=\frac{\gamma_0}{2m_0}\gamma,\chi).\label{subspace symmetry 2}
\end{align}
As a result, the complex eigenfrequencies on the two planes have spectral symmetry about the constant damping plane $\omega=i\omega_i=-i\frac{\gamma_0}{2m_0}$. In particular, akin to the $\kappa$-symmetry for the theoretical model, the subspace symmetry~\eqref{subspace symmetry 2} on the $\kappa=\frac{\gamma_0}{2m_0}\gamma$ plane can also be written as a non-Hermitian latent symmetry for the linearized Hamiltonian $\tilde{\mathcal{H}}=i\begin{pmatrix}
    0 & \mathbf{1}\\ -\mathbf{M}^{-1}\tilde{\mathbf{K}} & -\mathbf{M}^{-1}\tilde{\mathbf{\Gamma}}
\end{pmatrix}$ on that plane:
\begin{equation}\label{subspace latent kappa-symmetry}
    \sigma_x\qty(\tilde{\mathcal{H}}\qty(\kappa=\frac{\gamma_0}{2m_0}\gamma)^n)_{BR}\sigma_x=\qty(\mathcal{H}\qty(\kappa=\frac{\gamma_0}{2m_0}\gamma)^n)_{BR}^\dagger,\quad\forall\ n\in\mathbb{N}.
\end{equation}
Therefore, the emergence of the EC in the experimental system can be attributed to the subspace non-Hermitian latent symmetry of the model.




\section{Exceptional chain in 3D mechanical lattice and the effect of pulse splitting and needle pulse generation}\label{sec-3Dlattice}
In this section, we move on to a 3D active mechanical lattice and show that the non-Hermitian ECs can also stably exist in the momentum space of the 3D lattice under the protection of non-Hermitian latent crystalline symmetries unique to mechanical systems. Our finding implies that the notion of non-Hermitian latent symmetry is not only applicable to particular synthetic-dimension models, but is also fundamentally important in characterizing the non-Hermitian crystalline symmetries of general active mechanical lattices. Furthermore, we reveal that the peculiar complex eigenfrequency dispersion near the EC point can give rise to an extraordinary transport effect for rectangular wavepackets. Specifically, when the wavepacket scatters at the chain point, it will split into two pulses propagating in opposite directions; more surprisingly, the two pulses rapidly broaden in the transverse direction and evolve into needle pulses~\cite{wang2010Engineering,parker2016Longitudinal,grunwald2020Needle,cao2023Opticalresolution}. This exotic phenomenon demonstrates that, in addition to its theoretical significance as a new type of non-Hermitian topological nodal structure, the non-Hermitian ECs also have the potential to induce nontrivial physical effects and thus hold promising applications.

\subsection{Exceptional chain in 3D non-Hermitian nonreciprocal mechanical lattice}

\begin{figure*}[h]
\includegraphics[width=0.9\textwidth]{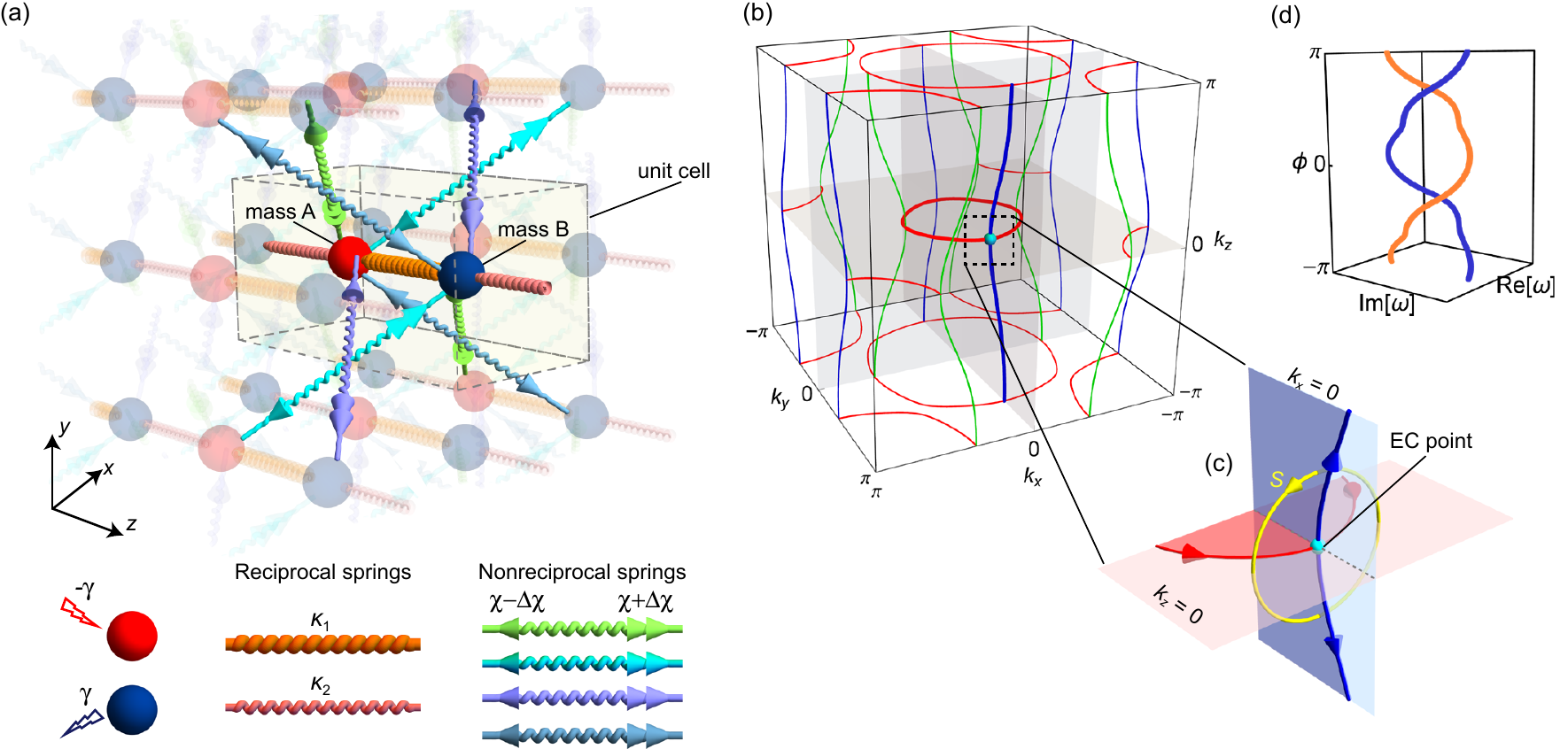}
\caption{\label{Fig-3Dlattice}
(a) The schematic of the 3D mechanical lattice. Two types of oscillators (red and blue) with intrinsic gain (anti-damping) and loss (damping) are connected by reciprocal and nonreciprocal springs. (b) The exceptional chain configuration in the momentum space under the parameters: $\kappa_1=1.3\kappa_0$, $\kappa_2=-0.7\kappa_0$, $\chi=0.5\kappa_0$, $\Delta\chi=0.4\kappa_0$, and $\gamma=0.7\sqrt{m\kappa_0}$. (c) The zoom-in plot of the nodal structure near an exceptional chain point. (d) The eigenfrequencies braiding along the loop $S$ in (c).
}
\end{figure*}

We consider a 3D mass-spring orthorhombic lattice shown in Fig.~\ref{Fig-3Dlattice}(a). In each unit cell (see, e.g., the marked unit cell in the dashed box), there exist two oscillators A and B that are fixed along the $z$ axis to undergo 1D vibrations. The two oscillators have the same mass $m$ and the same nature oscillating frequency $\sqrt{\kappa_0/m}$, but they have opposite intrinsic damping coefficients $-\gamma$ (i.e, gain for A) and $\gamma$ (loss for B), respectively. All springs connecting the oscillators in the marked unit cell are highlighted in solid colors, while other springs are displayed in translucency for clarity.  The intracell and intercell springs connecting the two types of oscillators in the $z$ directions are reciprocal with the stiffnesses $\kappa_1$ and $\kappa_2$, respectively. In contrast, the oblique springs connecting the next-nearest-neighbor unit cells are all nonreciprocal, which means that the restoring forces ($z$ component) exerted to the oscillators on the two sides take different values $f_\mathrm{da}=-(\chi+\Delta\chi) \delta z$ (the double-arrow side) and $f_\mathrm{sa}=-(\chi-\Delta\chi)\delta z$ (the single-arrow side), respectively, where $\delta z$  denotes the $z$ component of the stretched length of the spring, and $(\chi\pm\Delta\chi)$ denotes the effective stiffnesses for the two sides. Based on the connectivity of the springs in Fig.~\ref{Fig-3Dlattice}(a), the oscillatory equations of the lattice read
\begin{gather}
\begin{aligned}
    m\frac{d^2 }{dt^2}u_A(\mathbf{R}) = & -\kappa_0 u_A(\mathbf{R})+\kappa_1\big[u_B(\mathbf{R})-u_A(\mathbf{R})\big]+\kappa_2\big[u_B(\mathbf{R}-\hat{\mathbf{z}})-u_A(\mathbf{R})\big]\\
    &+\sum_{s_x,s_y=\pm1}(\chi-s_xs_y\Delta\chi)\big[u_B(\mathbf{R}+s_x\hat{\mathbf{x}}+s_y\hat{\mathbf{y}})-u_A(\mathbf{R})\big]+\gamma \frac{d}{dt}u_A(\mathbf{R}),
\end{aligned}\\
\begin{aligned}
    m\frac{d^2 }{dt^2}u_B(\mathbf{R}) = & -\kappa_0 u_B(\mathbf{R})+\kappa_1\big[u_A(\mathbf{R})-u_B(\mathbf{R})\big]+\kappa_2\big[u_A(\mathbf{R}+\hat{\mathbf{z}})-u_B(\mathbf{R})\big]\\
    &+\sum_{s_x,s_y=\pm1}(\chi+s_xs_y\Delta\chi)\big[u_A(\mathbf{R}+s_x\hat{\mathbf{x}}+s_y\hat{\mathbf{y}})-u_B(\mathbf{R})\big]-\gamma \frac{d}{dt}u_B(\mathbf{R}),
\end{aligned}
\end{gather}
where $u_{A/B}(\mathbf{R})$ denotes the $z$-displacement of the oscillator A/B in the unit cell at $\mathbf{R}$. Substituting the Bloch condition $(u_A(\mathbf{R}),u_B(\mathbf{R}))^\intercal=\ket{\psi(\mathbf{k})} e^{i\mathbf{k}\cdot\mathbf{R}-i\omega t}$ into the oscillatory equations, we obtain the quadratic eigenvalue equation for the Bloch modes $\ket{\psi(\mathbf{k})}$ in the momentum space:

\begin{align}
    \Bigg[
&-
\underbrace{\begin{pmatrix}
    (\kappa_0+\kappa_1+\kappa_2+4\chi) & -\kappa_1-\kappa_2 e^{-ik_z}-4(\chi \cos k_x\cos k_y+\Delta\chi\sin k_x\sin k_y) \cr
    -\kappa_1-\kappa_2 e^{ik_z}-4(\chi\cos k_x\cos k_y-\Delta\chi\sin k_x\sin k_y) & (\kappa_0+\kappa_1+\kappa_2+4\chi)\end{pmatrix}
}_{\displaystyle\vb{K}=(\kappa_0+\kappa_1+\kappa_2+4\chi)\sigma_0-(\kappa_1+\kappa_2\cos k_z+4\chi \cos k_x\cos k_y)\sigma_x-(\kappa_2\sin k_z+4i\Delta\chi\sin k_x\sin k_y)\sigma_y}\nonumber\\
&\qquad\qquad+m\omega^2\sigma_0 -i\omega\underbrace{\begin{pmatrix}
    -\gamma & 0 \cr 0 & \gamma
\end{pmatrix}}_{\displaystyle\vb{\Gamma}=-\gamma\sigma_z}
\Bigg]\ket{\psi(\mathbf{k})}=\left[\mathbf{K}(\mathbf{k})+ m \omega^2\sigma_0 -i\omega\mathbf{\Gamma} \right]\ket{\psi(\mathbf{k})}=Q(\omega,\mathbf{k})\ket{\psi(\mathbf{k})}=0.
\end{align}

The mechanical lattice possesses three crystalline symmetries, i.e., $C_{2x}T$, $C_{2y}T$ symmetries, and a generalized $z$-mirror-adjoint symmetry~\cite{zhang2022symmetry} that lead to the formation of an EC. The three symmetries demand the quadratic matrix polynomial $Q(\omega,\mathbf{k})$ to respect the following relations:
\begin{align}
    C_{2x}T\ \text{symmetry}:&\quad \sigma_xQ^*(\omega,-k_x,k_y,k_z)\sigma_x=Q(\omega^*,k_x,k_y,k_z),\label{C2xT}\\
    C_{2y}T\ \text{symmetry}:&\quad \sigma_xQ^*(\omega,k_x,-k_y,k_z)\sigma_x=Q(\omega^*,k_x,k_y,k_z),\label{C2yT}\\
    \text{Generalized } M_z\hbox{-}\dagger\ \text{symmetry}:&\quad \sigma_x Q^\dagger(\omega,k_x,k_y,-k_z)\sigma_x=Q(\omega^*,k_x,k_y,k_z).\label{Mzdagger}
\end{align}
The relations~(\ref{C2xT},\ref{C2yT}) for $C_{2x}T$ and $C_{2y}T$ symmetries exhibit a similar format to the $\gamma$-symmetry of the synthetic dimension model in the main text, and the generalized $M_z\hbox{-}\dagger$ symmetry Eq.~\eqref{Mzdagger} can be directly mapped to the $\kappa$-symmetry of the synthetic dimension model. In particular, the generalized $M_z\hbox{-}\dagger$ symmetry~\eqref{Mzdagger} of the mechanical lattice \textbf{cannot} be expressed as a usual mirror-adjoint symmetry~\cite{zhang2022symmetry} for the linearized Hamiltonian~\eqref{Hamiltonian}, such as $M_z\mathcal{H}^\dagger(k_x,k_y,-k_z)M_z^{-1}=\mathcal{H}(k_x,k_y,k_z)$. Instead, it also serves as a non-Hermitian latent crystalline symmetry for the linearized Hamiltonian, satisfying
\begin{equation}
    \sigma_x\qty(\mathcal{H}(k_x,k_y,-k_z)^n)_{BR}\sigma_x=\qty(\mathcal{H}(k_x,k_y,k_z)^n)^\dagger_{BR},\qquad \forall\ n\in\mathbb{N}.
\end{equation}

As shown in Fig.~\ref{Fig-3Dlattice}(b), the three symmetries~(\ref{C2xT}-\ref{Mzdagger}) confine the ELs in the momentum space into the symmetry-invariant planes of $k_{x,y,z}=0,\pi$. Furthermore, the source-free principle forces the PF ELs in different planes to connect together on the intersecting lines of the planes, resulting in an EC network that spans the entire Brillouin zone.

\subsection{Wavepacket splitting and needle pulses generation at EC point}

\begin{figure*}[t]
\includegraphics[width=0.9\textwidth]{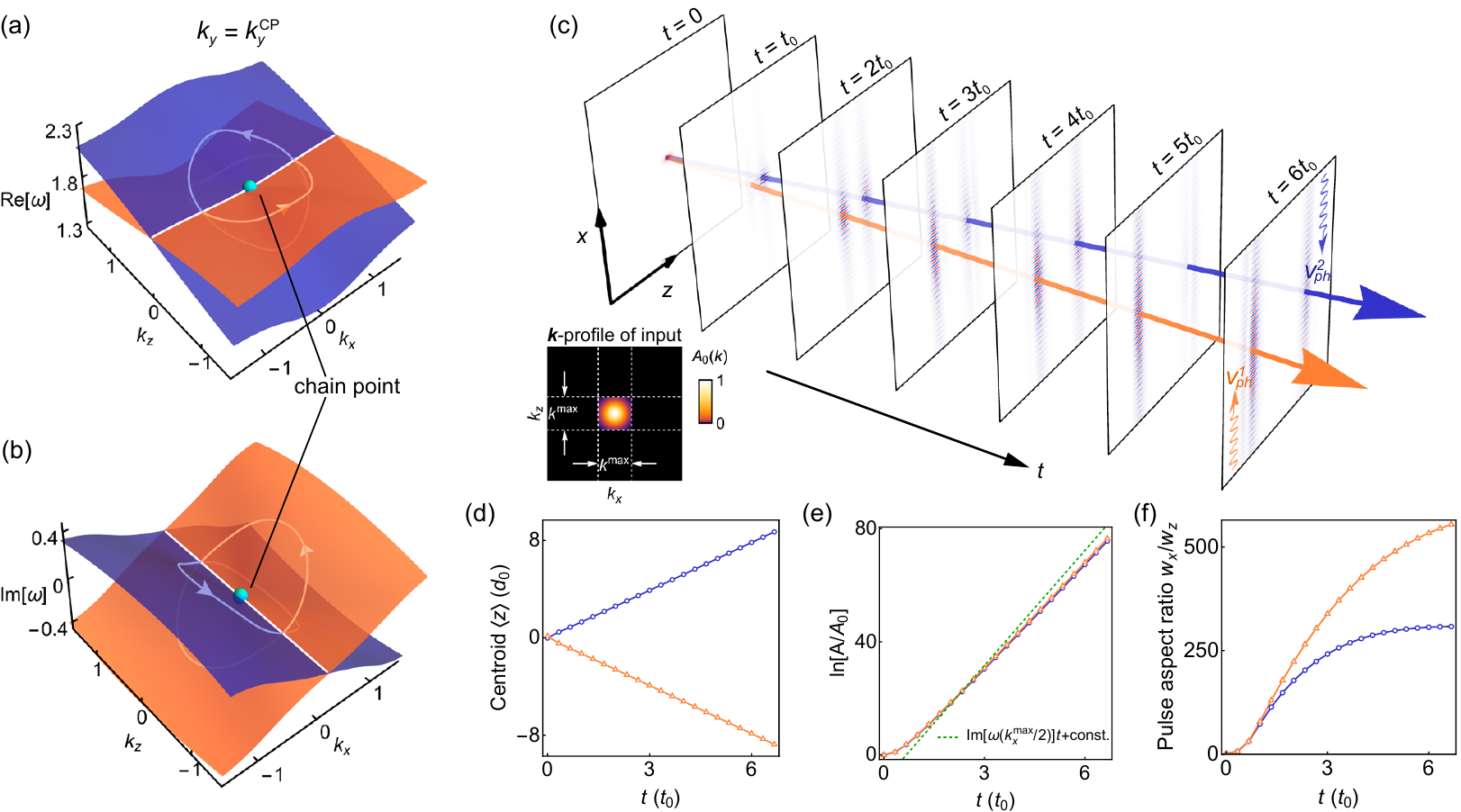}
\caption{\label{Fig-needle pulse}
(a,b) Band structure on the transverse section $k_y=k_y^\mathrm{CP}$ crossing the EC point in Fig.~\ref{Fig-3Dlattice}(c). (c) The time evolution of a wavepacket that is initially localized at a point. For clarity of presentation, the fields on different time slices are normalized independently. The inset in the bottom left illustrates the initial Bloch spectrum $A(\mathbf{k})$ on the $k_y=k_y^\mathrm{CP}$ plane inside the momentum space. The white dashed lines delineate the cutoff $\mathbf{k}$-boundaries outside which the Fourier amplitude rigorously jumps to zero. (d-f) The (d) $z$-coordinate of centroids $\langle{z}\rangle$, (b) the exponential growth rate of field amplitude $\ln(A/A_0)$, and (f) the aspect ratio $w_x/w_z$ of the two splitting pulses as functions of time.
Parameters: $k^\mathrm{max}=0.4\pi$, $q=0.05\pi$, $d_0=1/q$, $t_0=47\mathrm{s}$.}
\end{figure*}

Here, we focus on the bands near the EC point that is marked by the cyan dot in Fig.~\ref{Fig-3Dlattice}(b) and shown in closer detail in Fig.~\ref{Fig-3Dlattice}(c). The coordinates of the EC point can be analytically identified as 
\begin{equation}
    \mathbf{k}^\mathrm{CP}=\qty{0,k_y^\mathrm{CP}=\arccos\qty[\frac{-(\kappa_1+\kappa_2)+\gamma\sqrt{-\gamma^2/4+(\kappa_0+4\chi+\kappa_1+\kappa_2)})}{4\chi}],0}.
\end{equation}
At the chain point, two ingoing red half-ELs and two outgoing blue half-ELs osculate with each other, which can be seen from the twofold braiding of the eigenfrequencies along the loop $S$ as shown in Fig.~\ref{Fig-3Dlattice}(d). This local morphology reproduces the scenario of the synthetic dimension model in Fig.~1(b) of the main text. 

Despite the local configuration of the EC near the chain point bearing resemblance to that of a Hermitian nodal chain, the band dispersion near the chain point is notably different from its Hermitian counterpart. Figure~\ref{Fig-needle pulse}(a,b) depicts the eigenfrequency dispersions on the section plane $k_y=k_y^\mathrm{CP}$ that crosses the chain point, which shows that the real and imaginary parts of the two bands both linearly intersect each other along the lines $k_z=0$ and $k_x=0$, respectively (this is protected by the $C_{2x}T$ and the generalized $M_z\hbox{-}\dagger$ symmetries). Such distinct band dispersion ensures that this non-Hermitian degenerate point is a chain point of two ELs, rather than a general EP, due to the fact that the two complex eigenvalues braid twice along a loop (gray lines on the bands) encircling this point. The linear crossing of the real bands indicates the opposite group velocities of the modes on the two bands near the chain point, notwithstanding that the eigenstates of the two bands coalesce to an identical one at the chain point. And the linear dispersion of the imaginary parts indicates that the signal waveform would undergo peculiar deformations over time when scattered at the chain point.

To describe the transport phenomenon near the EC point, we consider a 2D wavepacket $\Psi(t;x,z)e^{ik_y^\mathrm{CP}y}$ that consists of the Bloch components primarily concentrated in the vicinity of the chain point in the momentum space. The time evolution of this wavepacket can be generally expressed by
\begin{equation}\label{wavepacket}
    \Psi(t;x,z)=\sum_{n=1,2}\Psi_n(t;x,z)=\sum_{n=1,2}\int_{-\pi}^{\pi} dk_xdk_z\,\underbrace{\exp\qty[-i\omega_n(k_x,k_y^\mathrm{CP},k_z)t]\,A_0(k_x,k_z)}_{\displaystyle =A_n(t;k_x,k_z)} \ket{\Psi^R_n(k_x,k_y^\mathrm{CP},k_z)}.
\end{equation}
Here, the 4D vector $\Psi=(u_A,u_B,\frac{d}{dt}u_A,\frac{d}{dt}u_B)^\intercal$ is composed of both the displacements and velocities of the two oscillators in each unit cell labeled by the coordinates $(x,z)$; $\Psi_n(t;x,z)$ ($n=1,2$) denotes the fields contributed from the two bands, respectively, with $\ket{\Psi^R_n(\mathbf{k})}$ and $\omega_n(\mathbf{k})$ being the right eigenvectors and eigenfrequencies of the two positive-frequency eigenstates at $\mathbf{k}$; and $A_n(t;k_x,k_z)$ determines the $\mathbf{k}$-space mode profiles on the two bands changing with time. In addition, we have assumed that the initial mode profiles on the two bands have an identical Gaussian distribution, which is centered at the chain point and truncated by a rectangular window function (see the bottom left inset in Fig.~\ref{Fig-needle pulse}(c)):
\begin{equation}
    A_1(0,k_x,k_z)=A_2(0,k_x,k_z)=A_0(k_x,k_z)=\mathrm{rect}(k_x/k^\mathrm{max},k_z/k^\mathrm{max})\exp[\frac{-({k_x}^2+{k_z}^2)}{4q^2}],
\end{equation}
where $\mathrm{rect}(k_x,k_z)$ denotes the rectangular window function satisfying $\mathrm{rect}(k_x,k_z)=1$ as $|k_x|\leq0.5$ and $|k_z|\leq0.5$, and $\mathrm{rect}(k_x,k_z)=0$ as $|k_x|>0.5$ or $|k_z|>0.5$. The window function is employed for the following reason. The imaginary part $\Im(\omega_n)$ of the eigenfrequencies determines the exponential growth or decay of the Bloch states over time. After a sufficiently long period of time, the Bloch component with the largest $\Im(\omega_n)$ invariably overwhelms all other components and dominates the behavior of the wave. Therefore, to ensure that the propagation behavior of the wavepacket reflects the characteristic of the EC for a long enough time, it is necessary to truncate the Bloch components far away from the chain point.

In Fig.~\ref{Fig-needle pulse}(c), the field distributions computed at different time slices exhibit that the initially point-like wavepacket splits into two pulses propagating in the opposite directions along the $z$ axis. And more intriguingly, the two split pulses rapidly broaden in the $x$ direction and evolve into two needle-like pulses~\cite{wang2010Engineering,parker2016Longitudinal,grunwald2020Needle,cao2023Opticalresolution} with  tremendous aspect ratios.  This brand-new effect is indeed a unique phenomenon stemming from the concurrent linear intersections of the real and imaginary parts of the bands at the EC point (see Fig.~\ref{Fig-needle pulse}(a,b)). In what follows, we explain the relationship between the unique band structure at the EC point and the formation of the needle pulses. 

First, the splitting of the wavepacket is attributed to the linear crossing of the real parts of the eigenfrequency bands. The centroid positions of the two split pulses $\Psi_{1/2}(t;x,z)$, which are composed of the modes on the two bands, respectively, are depicted in Fig.~\ref{Fig-needle pulse}(d). The linear trajectories demonstrate the opposite group velocities of the two pulses. Second, the wave deformation from point-like pulses to needle pulses is induced by the linear crossing of the imaginary parts of the bands. As shown in Fig.~\ref{Fig-needle pulse}(b), the two imaginary bands $\Im(\omega_{1,2}(k_x,k_y^\mathrm{CP}k_z))$ exhibit linear dispersion in the $k_x$ direction, while they are nearly dispersionless along the $k_z$ axis.  Therefore, the wave components at the opposite edges $k_x=k^\mathrm{max}/2$ and $k_x=-k^\mathrm{max}/2$ of the window function have the highest exponential growth rate $\Im(\omega^\mathrm{max})=\Im(\omega_{1}(k^\mathrm{max}/2,k_y^\mathrm{CP},k_z))=\Im(\omega_{2}(-k^\mathrm{max}/2,k_y^\mathrm{CP},k_z))$ in the orange and purple pulses, respectively. Consequently, these marginal components become the dominant Bloch components over time. The numerical results in Fig.~\ref{Fig-needle pulse}(e) support this analysis by demonstrating that the amplitudes of the two pulses approach a uniform exponential growth with the rate $\Im(\omega^\mathrm{max})$ (see the green dashed line). Thus, as time progresses, the two pulses gradually expand their transverse width, $w_x$, and approach two plane waves along the $x$ direction with opposite phase velocities $v_{1,2}^\mathrm{ph}\approx\pm 2\Re(\omega^\mathrm{CP})/k^\mathrm{max}$ (see the wavy arrows in the last slice in Fig.~\ref{Fig-needle pulse}(c)). Meanwhile, the pulses experience negligible spatial diffraction and maintain small pulse widths $w_z$ in the $z$ direction thanks to the flat dispersion of the imaginary bands in this direction. As shown in Fig.~\ref{Fig-needle pulse}(f), this novel anisotropic wave-shaping mechanism results in the formation of two transverse needle pulses (here, ``transverse'' means that the long axis of the pulses is transverse to the propagation direction of it) with giant aspect ratios, $w_x/w_z$, of up to the order of hundreds.

In summary, our findings reveal that the unusual complex band dispersion of ECs can give rise to the unprecedented wave-shaping effect of splitting a rectangular wavepacket into two transverse needle pulses with perpendicular group and phase velocities that both take opposite directions in the two pulses. We believe the contribution of this discovery is not limited to proposing an alternative method for generating needle pulses but also opens up a new direction for exploring the physical effects and potential applications arising from non-Hermitian EPs.

\hspace{-60pt}
\appendix

\section*{Appendix: Proof of Theorem~\ref{latent symmetry}}\label{appendix A}

\begin{proof}
\textbf{Left $\Leftarrow$ right}. First, we prove ``left $\Leftarrow$ right'' by mathematical induction. For $n=0$, the left equality is trivially satisfied. Assuming $L\left(\mathcal{H}^{m}\right)_{SS} L^{-1}=\left(\mathcal{H}^{n-1}\right)_{SS}^\dagger$ holds for $m\leq n-1$, next we need to prove that it is also satisfied for $n$. To see this, we derived the recurrence relation of $\mHss[n]$ from $\mH^n=\mH^{n-1}\mH=\mH\mH^{n-1}$: 
\begin{equation}
   \begin{split}\label{recursion of Hss}
   \mHss[n]&=\mHss[n-1]\mH_{SS}+\sum_{m=0}^{n-2} \mHss[m]\mH_{S\bar{S}}(\mH_{\bar{S}\bar{S}})^{n-m-2}\mH_{\bar{S}S}\\
   &=\mH_{SS}\mHss[n-1]+\sum_{m=0}^{n-2}\mH_{S\bar{S}}(\mH_{\bar{S}\bar{S}})^{n-m-2}\mH_{\bar{S}S} \mHss[m].
\end{split}
\end{equation}
From $L\mathcal{R}_S(\mathcal{H},\omega)L^{-1}=\mathcal{R}_S(\mathcal{H},\omega^*)^\dagger$, we have \begin{gather}
    L\mathcal{H}_{SS}L^{-1}=\lim_{\omega\rightarrow\infty}L\mathcal{R}_S(\mathcal{H},\omega)L^{-1}=\lim_{\omega\rightarrow\infty}\mathcal{R}_S(\mathcal{H},\omega^*)^\dagger=\mathcal{H}_{SS}^\dagger,\label{eq 1}\\
    L\qty[\mH_{S\bar{S}}(\mH_{\bar{S}\bar{S}}-\omega I)^{-1}\mH_{\bar{S}S}]L^{-1}=\qty[\mH_{S\bar{S}}(\mH_{\bar{S}\bar{S}}-\omega^* I)^{-1}\mH_{\bar{S}S}]^\dagger,\label{eq 2}\\
    L\qty[\mH_{S\bar{S}}\qty(\mH_{\bar{S}\bar{S}})^{n}\mH_{\bar{S}S}]L^{-1}=\qty[\mH_{S\bar{S}}\qty(\mH_{\bar{S}\bar{S}})^{n}\mH_{\bar{S}S}]^\dagger\quad \forall\ n\in\mathbb{N}.\label{eq 3}
\end{gather}
Here, Eq.~\eqref{eq 3} is obtained from Eq.~\eqref{eq 2} and the Neumann series $(\mH_{\bar{S}\bar{S}}-\omega I)^{-1}=\sum_{n=0}^{\infty}\qty(1/\omega)^{n+1}\qty(\mH_{\bar{S}\bar{S}})^n$ for sufficiently large $\omega$ such that $\det[\mH_{\bar{S}\bar{S}}/\omega]<1$ (the convergent condition of the series).

In terms of Eqs.~(\ref{recursion of Hss},\ref{eq 1},\ref{eq 3}) and $L\left(\mathcal{H}^{m}\right)_{SS} L^{-1}=\left(\mathcal{H}^{n-1}\right)_{SS}^\dagger$ for $m\leq n-1$, we acquire
\begin{equation}
\begin{split}    
    L\mHss[n]L^{-1} &=\qty(L\mHss[n-1]L^{-1})\qty(L\mH_{SS}L^{-1})+\sum_{m=0}^{n-2} \qty(L\mHss[m]L^{-1})\qty(L\mH_{S\bar{S}}(\mH_{\bar{S}\bar{S}})^{n-m-2}\mH_{\bar{S}S}L^{-1})\\
    &=\mHss[n-1]^\dagger\mH_{SS}^\dagger+\sum_{m=0}^{n-2}(\mH^m)_{SS}^\dagger\qty(\mH_{S\bar{S}}(\mH_{\bar{S}\bar{S}})^{n-m-2}\mH_{\bar{S}S})^\dagger\\
    &=\qty[\mH_{SS}\mHss[n-1]+\sum_{m=0}^{n-2}\mH_{S\bar{S}}(\mH_{\bar{S}\bar{S}})^{n-m-2}\mH_{\bar{S}S}(\mH^m)_{SS}]^\dagger=(\mH^n)_{SS}^\dagger.
\end{split}
\end{equation}
From the induction, we have proved ``left $\Leftarrow$ right''.

\textbf{Left $\Rightarrow$ right}. From the recurrence relation~\eqref{recursion of Hss} and $L\left(\mathcal{H}^n\right)_{SS} L^{-1}=\left(\mathcal{H}^n\right)_{SS}^\dagger\ \forall\  n\in\mathbb{N}$, we can also prove the following relation by induction:
\begin{equation}
    L\qty[\mH_{S\bar{S}}\qty(\mH_{\bar{S}\bar{S}})^{n}\mH_{\bar{S}S}]L^{-1}=\qty[\mH_{S\bar{S}}\qty(\mH_{\bar{S}\bar{S}})^{n}\mH_{\bar{S}S}]^\dagger\quad \forall\ n\in\mathbb{N}.\label{eq 4}
\end{equation}
On the other hand, by the Cayley-Hamilton theorem, the inverse of a matrix $A$ can be expressed as $A^{-1}=-\sum_{k=0}^{\mathrm{dim} A}\frac{c_k}{c_0}A^k$ where $c_k$ ($k=0,\cdots,\mathrm{dim} A$) are the coefficients of the characteristic polynomial of $A$: $p_A(\lambda)=\det[A-\lambda]=\sum_{k=0}^{\mathrm{dim} A} c_k \lambda^k$ with $c_0=(-1)^{\mathrm{dim} A}\det[A]\neq0$. Using this formula, the isospectral reduction can be expanded as
\begin{equation}
\begin{split}\label{expansion of R}   
    \mathcal{R}_S(\mathcal{H},\omega)&=\mathcal{H}_{SS}-\mH_{S\bar{S}}\sum_{k=0}^{\mathrm{dim} \bar{S}}\frac{c_k(\omega)}{c_0(\omega)}(\mH_{\bar{S}\bar{S}}-\omega I)^k\mH_{\bar{S}S}\\
    &=\mathcal{H}_{SS}-\sum_{k=0}^{\mathrm{dim} \bar{S}}\frac{c_k(\omega)}{c_0(\omega)}\sum_{n=0}^k C_k^n(-\omega)^{k-n} \mH_{S\bar{S}}(\mH_{\bar{S}\bar{S}})^n\mH_{\bar{S}S},
\end{split}
\end{equation}
where $c_k(\omega)$ are the coefficients of the characteristic polynomial of $(\mH_{\bar{S}\bar{S}}-\omega I)$, which satisfy $c_k(\omega^*)=c_k(\omega)^*$ due to $\det[\mH_{\bar{S}\bar{S}}-(\omega+\lambda) I]^*=\det[\mH_{\bar{S}\bar{S}}^\dagger-(\omega^*+\lambda^*) I]$, and $C_k^n=\frac{k!}{n!(k-n)!}$ are the binomial coefficients. Using Eq.~\eqref{eq 4} and \eqref{expansion of R},
\begin{equation}
\begin{split}    
    L\mathcal{R}_S(\mathcal{H},\omega)L^{-1}&=L\mathcal{H}_{SS}L^{-1}-\sum_{k=0}^{\mathrm{dim} \bar{S}}\frac{c_k(\omega)}{c_0(\omega)}\sum_{n=0}^k C_k^n(-\omega)^{k-n} L\qty[\mH_{S\bar{S}}(\mH_{\bar{S}\bar{S}})^n\mH_{\bar{S}S}]L^{-1}\\
    &=\mathcal{H}_{SS}^\dagger-\sum_{k=0}^{\mathrm{dim} \bar{S}}\frac{c_k(\omega^*)^*}{c_0(\omega^*)^*}\sum_{n=0}^k C_k^n(-\omega)^{k-n} \qty[\mH_{S\bar{S}}(\mH_{\bar{S}\bar{S}})^n\mH_{\bar{S}S}]^\dagger
    =\mathcal{R}_S(\mathcal{H},\omega^*)^\dagger,
\end{split}
\end{equation}
which completes the proof.
\end{proof}

\bibliography{reference.bib}